\definecolor{blue2}{HTML}{5DADE2}
\renewcommand\leq\varleq
\renewcommand\leqslant\le
\renewcommand\geq\vargeq
\renewcommand\geqslant\ge
\newcommand{\bpm}{\begin{pmatrix}}
\newcommand{\epm}{\end{pmatrix}}
\theoremstyle{definition}
\theoremstyle{plain} 
\newtheorem{ass}{Assumption}[section]
\newtheorem{lemma}[ass]{Lemma}
\newtheorem{problem}{Test problem}
\newcommand{\True}{\textsc{true}}
\newcommand{\False}{\textsc{false}}
\let\original@algocf@latexcaption\algocf@latexcaption
\long\def\algocf@latexcaption#1[#2]{%
  \@ifundefined{NR@gettitle}{%
    \def\@currentlabelname{#2}%
  }{%
    \NR@gettitle{#2}%
  }%
  \original@algocf@latexcaption{#1}[{#2}]%
}
\newcommand*{\algotitle}[2]{%
  \stepcounter{algocf}%
  \hypertarget{algocf.title.\theHalgocf}{}%
  \NR@gettitle{#1}%
  \label{#2}%
  \addtocounter{algocf}{-1}%
}
\algnewcommand{\algorithmicand}{\textbf{ and }}
\algnewcommand{\algorithmicor}{\textbf{ or }}
\algnewcommand{\algorithmicnot}{\textbf{ not }}
\algnewcommand{\algorithmicis}{\textbf{ is }}
\algnewcommand{\algorithmicnull}{\text{ null }}
\algnewcommand{\OR}{\algorithmicor}
\algnewcommand{\AND}{\algorithmicand}
\algnewcommand{\NOT}{\algorithmicnot}
\algnewcommand{\IS}{\algorithmicis}
\algnewcommand{\NULL}{\algorithmicnull}
\DeclareMathOperator*{\argmin}{arg\,min}
\title{A generic Branch-and-Cut algorithm for bi-objective binary linear programs}
\author[1]{Pierre Fouilhoux, Lucas L\'etocart, Yue Zhang}
\affil[1]{LIPN - Université Sorbonne Paris Nord, Villetaneuse - CNRS\\ 
\small \{pierre.fouilhoux,lucas.letocart,yue.zhang@lipn.univ-paris13.fr\}
}
\date{\today}
\begin{document}

\maketitle

{\small\bf Abstract:}
This paper presents the first generic bi-objective binary linear branch-and-cut algorithm.  Studying the impact of valid inequalities in solution and objective spaces, two cutting frameworks are proposed. The multi-point separation problem is introduced together with a cutting algorithm to efficiently generate valid inequalities violating multiple points simultaneously. The other main idea is to invoke state-of-the-art integer linear programming solver's internal advanced techniques such as cut separators. Aggregation techniques are proposed to use these frameworks with a trade-off among efficient cut separations, tight lower and upper bound sets and advanced branching strategies. Experiments on various types of instances in the literature exhibit the promising efficiency of the algorithm that solves instances with up to 2800 binary variables in less than one hour of CPU time. Our algorithms are easy to extend for more than two objectives and integer variables.

{\small\bf Key Words:} Multi-objective binary linear programming, Branch and Cut, Polyhedral approach.


\section{Introduction}

Many real-world combinatorial optimization applications are characterized by multiple, usually conflicting objectives. Enumerating the complete set of \textit{Pareto optimal} (so-called efficient) solutions remains computationally challenging, since the Pareto optimal set usually is not polynomial with respect to the size of the problem instance. Our ambition is to extend the Branch-and-Bound (B\&B) framework developed in efficient optimization solvers to generic multi-objective problems, and embed with efficient valid cuts. 

In this work, we design and implement a Branch-and-Cut (B\&C) method for Bi-Objective Binary Linear Problems (BOBLP). Note that this can be easily generalized to Bi-Objective Integer Linear Problems (BOILP) as well. The BOBLP is written as $\{ \min z(x) = (z_1(x), \,  z_2(x)) \text{ s.t. } x \in \mathcal{X} = \{x \in \{0,1\}^n : Ax \leqslant b\} \}$, where $z_k(x) = c_k x \, , c_k \in \mathbb{R}^n \, \forall k = [1,2]$, $A \in \mathbb{R}^{m \times n}$, $b \in \mathbb{R}^m$. The set of two linear functions $z(\cdot) : \mathbb{R}^n \rightarrow \mathbb{R}^2$ maps a solution from the \textit{solution space} $\mathbb{R}^n$ to the \textit{criteria (or objective) space} $\mathbb{R}^2$. Given a solution $x$ of the \textit{feasible set} $\mathcal{X}$, $y = z(x)$ is a \textit{point} of $\mathcal{Y} = z(\mathcal{X})$.

Given two different points $y$ and $y'$ in $\mathbb{R}^2$, $y \leq y'$ ($y$ \textit{dominates} $y'$) if $y_1 \leqslant y'_1$ and $y_2 \leqslant y'_2$. A feasible solution $x \in \mathcal{X}$ is called \textit{efficient} if no other feasible solution dominates itself. Let $\mathcal{X}_E = \{ x \in \mathcal{X} \, | \, \nexists x'\in \mathcal{X}, z(x') \leq z(x)\}$ denote the \textit{efficient set}. The BOBLP is to provide $\mathcal{X}_E$. The image of an efficient solution in criteria space is called \textit{non-dominated point} and $\mathcal{Y}_N = z(\mathcal{X}_E)$ is the \textit{non-dominated set}. In this work, we propose an exact Bi-Objective Binary Linear Branch-and-Cut (BOBLB\&C) algorithm to solve BOBLP.

In the literature, many exact algorithms cope with BOILP problems by solving a series of single-objective Integer Linear 
 Problems (ILP) with the aid of powerful solvers, which are known as the criteria space search algorithm category. The $\epsilon$-constraint algorithm \cite{chankong1983, mavrotas2009effective} is the state-of-the-art method that optimizes repeatedly in one objective direction with a moving step $\epsilon \in \mathbb{R}$ in the other objective. However this method provides $\mathcal{Y}_N$ but not all $\mathcal{X}_E$, unless each ILP run enumerates all the optimal solutions. The dichotomy algorithm \cite{aneja1979bicriteria} is another classical approach which not provides all $\mathcal{Y}_N$ but only the (supported) extreme non-dominated points of $conv (\mathcal{Y})_N$. The balanced box search algorithm proposed by \cite{boland2015criterion} exploits $\mathcal{Y}_N$ with a complexity worse than the one of $\epsilon$-constraint but without the numerical issues raised by $\epsilon$ value. The weighted Tchebycheff scalarization and its varieties algorithms \cite{10.1007/978-3-642-87563-2_5, ralphs2006improved, steuer1983interactive} produce $conv (\mathcal{Y})_N$. 
 Other related scalarization algorithms are reviewed in \cite{ehrgott2006discussion}.

Note that not all of the previously mentioned algorithms provide $\mathcal{X}_E$ nor even $\mathcal{Y}_N$, or can be directly applied in many-objective context. Moreover, solving the ILP problem is generally $\mathcal{NP}$-hard, and calling the ILP solver may highly time consuming for large or hard problems. The other algorithm category called solution space search mainly refers to multi-objective B\&B(\&C) algorithms that enumerate all efficient solutions by solving the relatively simple and relaxed subproblems divided from the solution space.

The extension of the classic single-objective B\&B for bi-objective \cite{kiziltan1983algorithm} and many-objective problems \cite{forget2022warm} has arisen in the state-of-the-art approaches. Most of the BOB\&B algorithms are investigated for particular problems, such as the bi-objective minimum spanning tree problem (\cite{ramos1998problem}, \cite{sourd2008multiobjective}), the bi-objective multi-dimensional knapsack problem (\cite{cerqueus2015bi}, \cite{florios2010solving}). The latest advances in the Multi-Objective Branch-and-Bound (MOB\&B) algorithm can be found in this survey \cite{przybylski2017multi}. In the recent decade, significant developments of BOB\&C algorithms have been studied in \cite{jozefowiez2012generic, adelgren2022branch, gadegaard2019bi}. The idea is to invoke the single-objective cutting planes iteratively on each lower bound, and the efficiency of cutting improvements is preliminarily shown by applying valid cuts dedicated to specific problems. Further improvements of the classical B\&B framework are to explore and to decompose both the criteria and solution space, such as slicing \cite{stidsen2018hybrid}, Pareto branching \cite{stidsen2014branch}, extended Pareto branching \cite{gadegaard2019bi} and with three objectives \cite{gadegaard2020branch}. 

With the belief that strengthening the relaxation bounds would help the nodes fathoming, we design a generic BOBLB\&C algorithm with different cutting approaches for binary linear programs but applicable to integer problems as well. In Section \ref{sec:BOBB}, we introduce the preliminary notions and the Bi-Objective Binary Linear Branch \& Bound (BOBLB\&B) framework implemented from scratch. In Section \ref{sec:BOBC}, we exhibit two efficient generic cutting approaches, which are the multi-point and the ILP solver cutting planes. Additionally, we balance the lower bound set exactness with running time. A detailed computational analysis of our algorithms' performance is given in Section \ref{sec:experiments}. Finally, this paper ends with the concluding remarks and perspectives in Section \ref{sec:conclusion}.


\section{Bi-Objective Branch-and-Bound algorithm}\label{sec:BOBB}

The BOBLB\&B method presented in this section is based on classical Branch\&Bound approaches for mono-objective problems but generalizes different aspects such as bounds, branching and pruning.

\subsection{Ingredients of BOBLB\&B algorithm}

\subsubsection{Maintenance of bound sets}

We consider the lower and upper bound sets, respectively denoted $\mathcal{L}$ and $\mathcal{U}$, defined by \cite{ehrgott2007bound}. Denote $\mathcal{N}(\mathcal{U})$ the set of \textit{local nadir points} of $\mathcal{U}$. whose coordinates are $ (u^{''}_1 ,u^{'}_2) $ for every two consecutive points $u', u''$  of $\mathcal{U}$ from the left to the right in objective space.

\paragraph{Local relaxed bound set}

The integrity relaxation BOLP of BOBLP is $\{\min z(x) = (z_1(x), \, z_2(x)) \text{ s.t. } x \in \widetilde{\mathcal{X}} = \{x \in [0,1]^n : Ax \leqslant b\}  \}$. Note that the non-dominated set of BOLP is the non-dominated boundaries of the polyhedron $\widetilde{\mathcal{Y}} = z(\widetilde{\mathcal{X}} )$, denoted $\widetilde{\mathcal{Y}}_N$.  \cite{ehrgott2007bound} have shown that $ \widetilde{\mathcal{Y}}_N$ is a valid lower bound set. The line segments of $ \widetilde{\mathcal{Y}}_N$ are linked by the extreme supported non-dominated relaxed points that can be exhaustively enumerated by solving the weighted sum problem $\widetilde{P}(\lambda) \{\min_{x \in \widetilde{\mathcal{X}}}  \sum_{k=1}^2 \lambda_k z_k(x) \}$, where $\lambda$ are the scalar parameters in dichotomy algorithm.

\paragraph{Global upper bound set} During the BOBLB\&B tree search process, we keep the set of feasible non-dominated points encountered so far, which is filtered by dominance. In \autoref{algo:update_incumbent}, $\mathcal{U}$ contains the non-dominated feasible points after adding $y$ the new integer point just found. Note that we store in $\mathcal{S}$ the equivalent solutions mapping the same point in criteria space. At the end of the BOBLB\&B tree search, the global upper bound set $\mathcal{U}$ is $\mathcal{Y}_N$ and $\mathcal{S}(\mathcal{U})$ equals to $\mathcal{X}_E$. In our implementation, the upper bound set can be initialized with either a set of heuristic solutions or an empty set. 

\begin{algorithm}
\scriptsize
\KwIn{$\mathcal{U}$ an upper bound set, $\mathcal{S}$ the dictionary associating the list of equivalent solutions with a criteria point, and $y$ the new criteria point found.}
\KwOut{$\mathcal{U}$ the non-dominated feasible points updated with filtering dominance.}

\For{$y' \in \mathcal{U}$}{
\If{$y' = y$ }{
    \For{$x \in S(y)$}{$S(y').add(x)$ \;}
    \Return $\mathcal{U}$ \;
}\ElseIf{$y' \leq y $ }{
   \Return $\mathcal{U}$  \tcp*{$y$ is dominated by $\mathcal{U}$}
}\ElseIf{$y \leq y'$}{
    $\mathcal{U}.remove(y')$ \;
}
}
$\mathcal{U}.add(y)$ \;
\Return $\mathcal{U}$ \;
\caption{\textsc{Update upper bound set}($\mathcal{U}$, $\mathcal{S}$, $y$)} 
\label{algo:update_incumbent}
\end{algorithm}

\subsubsection{Three rules of node fathoming}

Analogously to the single-objective B\&B algorithm, the fathoming rules also consist of three basic cases for bi-objective problems.

\paragraph{Fathoming by infeasibility} Given a subproblem, if the feasible region defined by the linear system with the partial assignment on variables is empty, then this node is pruned by infeasibility.

\paragraph{Fathoming by integrity} A bi-objective subproblem can be pruned by integrity if all the non-dominated integer points for the current subproblem are known. The current node is pruned when the BOLP subproblem has a unique extreme point which is integer.
        
\paragraph{Fathoming by dominance} By \cite{forget2022warm}, if the local lower bound set is dominated by the upper bound set, i.e. if for each local nadir point $u \in \mathcal{N}(\mathcal{U})$, $u \notin \mathcal{L} + \mathbb{R}^2_{\geqq} $, then we can conclude that all the non-dominated points of the current subproblem will be dominated by the upper bound set. The fully explicit dominance test, given in \autoref{algo:dominance_test}, is done by the complete pairwise comparison between each local nadir point in $\mathcal{N}(\mathcal{U})$ and every segment $\mathcal{L}$.

\begin{algorithm}
\scriptsize
\KwIn{$\mathcal{L}$ the local lower bound set, $\mathcal{U}$ the global upper bound set.}
\KwOut{\True \, if $\mathcal{L}$ is dominated by $\mathcal{U}$, \False \, otherwise.}

\For{$i \in [1, | \mathcal{L}| ]$}{
    $y \leftarrow\mathcal{L}[i]$ \; 
    $\lambda \leftarrow$ \text{orthogonal vector to the i-th segment of $\mathcal{L}$} 
    
    \For{$u \in \mathcal{N}(\mathcal{U})$}{
    \If{$\lambda^T u > \lambda^T y$}{\Return \False \tcp*{since the local nadir point is above $\mathcal{L}$} }
    }
}
\Return \True \;
\caption{\textsc{Fully explicit dominance test}($\mathcal{L}$, $\mathcal{U}$)} 
\label{algo:dominance_test}
\end{algorithm}

\subsubsection{Branching} In this section, we describe the branching strategies in our BOBLB\&B algorithm.

\paragraph{Branching on variables} The BOLP relaxation problem gives the lower bound set consisting of several fractional solutions, which makes it hard to choose which one of the variables to split. In our implementation, the next free variable to branch on is the next index one.


\paragraph{Branching on criteria space (Pareto branching)} The motivation for the objective branching in the bi-objective context is to partition the searching area from the criteria space. \cite{stidsen2018hybrid} suggested the slicing operation that divides the criteria space into equilibrated cones with additional constraints. \cite{stidsen2014branch} proposed the binary Pareto branching that splits the non-dominated area by bounding on the incumbent set. With the hope that the more affined non-dominated area splits, the more searching space could be discarded. \cite{gadegaard2019bi} investigated the extended Pareto branching rule partitioning the non-dominated area into distinct parts as many as the number of non-dominated local nadir points.

\paragraph{Branching strategy} In our implementation \autoref{algo:pareto_branching}, we apply Pareto branching on $\mathcal{N}(\mathcal{U})$ with priority, if they are not branched in any predecessors before. Otherwise, we continue to branch on free variables. 
    
\begin{algorithm}
\scriptsize
\KwIn{$\eta$ a node in the B\&B tree. 
}

$EPB \leftarrow \True$ \tcp*{will stay \True , if the extended Pareto branching is applied}

\For{$u \in \eta.local\_nadir\_points$}{
    \If{$u \in predecessor(\eta).local\_nadir\_points$}{
    $EPB \leftarrow \False$ \;
    \text{\textbf{goto} line 11} \; 
    }
}

\If{$EPB$}{
\tcp{split node $\eta$ by extended Pareto branching}
\For{$u \in \eta.local\_nadir\_points$}{
    $\eta' \leftarrow \eta \cup \{z_1(x) \leqslant u[1], \, z_2(x) \leqslant u[2]\}$ \; 
    $Tree.add\_sucessor(\eta, \eta')$ \;
}

}\Else{ \tcp{otherwise branch on free variable}
$x' \leftarrow choose\_a\_free\_var(\eta)$ \;
$\eta'_0 \leftarrow \eta \cup \{x' = 0\}$ \; 
$\eta'_1 \leftarrow \eta \cup \{x' = 1\}$ \; 
$Tree.add\_sucessor(\eta, \eta'_0)$ \;
$Tree.add\_sucessor(\eta, \eta'_1)$ \;

}

\caption{\textsc{Branching}($\eta$)} 
\label{algo:pareto_branching}
\end{algorithm}

\paragraph{Choice of the active node} In the literature, most of the BOB\&B algorithms use the depth-first search order. Since the lower bound sets are not always comparable, the classical best-first strategy is not straightforwardly applicable. We implement three possible strategies that are the depth-first search, the breadth-first search, and the arbitrary order.

\subsubsection{BOBLB\&B algorithm}

As every component of the BOBLB\&B algorithm is detailed previously, we conclude in the following the procedure applied iteratively for each subproblem and a pseudocode description in \autoref{algo:B&B}, where every non visited node is evaluated by the lower and upper bound sets. Given a B\&B node $\eta$, solving the BOLP relaxation of BOBLP by the dichotomy method yields the exact non-dominated boundaries $\widetilde{\mathcal{Y}}(\eta)_N$ in line 7 as a valid lower bound set. If the current subproblem $BOLP(\eta)$ is unfeasible, $\eta$ is pruned by infeasibility in lines 8-9. The global upper bound set is updated with new feasible points found in lines 11-13. If the lower bound set consists of a unique integer point, $\eta$ is fathomed by integrity in lines 14-15. In lines 16-17 $\eta$ is pruned if dominated. If the current node is not fathomed, the branching rules in \autoref{algo:pareto_branching} are performed. To optimize memory space, if all successors of a node are visited and evaluated then that node is removed from the B\&B tree. The above procedure is repeated until all subproblems are evaluated.

\begin{algorithm}
\scriptsize
\KwIn{A bi-objective binary linear program BOBLP $\{\min z(x) = (z_1(x), \, z_2(x)) \text{ s.t. } x \in \mathcal{X} = \{x \in \{0,1\}^n : Ax \leqslant b\}  \}$.}
\KwOut{The set of non-dominated points $\mathcal{Y}_N$.}

$\mathcal{U} \leftarrow \emptyset $ \tcp*{ the global upper bound set}

$\eta \leftarrow \{\min z(x) = (z_1(x), \, z_2(x)) \text{ s.t. } x \in \widetilde{\mathcal{X}} = \{x \in [0,1]^n : Ax \leqslant b\}  \} $ \tcp*{initialize the root node}

$Tree \leftarrow \{\eta \}$ \;
$\mathcal{S} \leftarrow \emptyset$ \tcp*{a dictionary associating the list of equivalent solutions with a criteria point}

\While{$has\_non\_visited\_node(Tree)$}{

$\eta \leftarrow choose\_a\_non\_visited\_node(Tree) $ \;

$\mathcal{L} \leftarrow BOLP\_relaxation(\eta)$ \tcp*{compute the lower bound set}

\If{$\mathcal{L} \IS \NULL $}{ 
$free(\eta)$ \tcp*{fathoming by infeasibility}
}\Else{
\For{$y \in \mathcal{L}$}{
\If{$is\_feasible(\mathcal{S}(y))$}{\nameref{algo:update_incumbent} \;}
}

\If{$|\mathcal{L}|=1 \AND is\_feasible( \mathcal{S}(\mathcal{L}[1]))$}{
    $free(\eta)$ \tcp*{fathoming by integrity}
}\ElseIf{\nameref{algo:dominance_test}}{ 
$free(\eta)$ \tcp*{fathoming by dominance}
}\Else{
\nameref{algo:pareto_branching} \;

\If{\NOT $has\_non\_visited\_successor(\eta.predecessor)$}{$free(\eta.predecessor)$ \;}
}
}
}
\Return $\mathcal{U}$ \;
\caption{\textsc{Bi-objective Binary Linear Branch\&Bound}($BOBLP$)} 
\label{algo:B&B}
\end{algorithm}

\begin{lemma}
At the end of BOBLB\&B algorithm, all integer equivalent solutions of $\mathcal{S}(\mathcal{U})$ is $\mathcal{X}_E$ the complete efficient set.
\end{lemma}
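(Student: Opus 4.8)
The plan is to establish, at termination of \autoref{algo:B&B}, the two inclusions $\mathcal S(\mathcal U)\subseteq\mathcal X_E$ and $\mathcal X_E\subseteq\mathcal S(\mathcal U)$, routed through the auxiliary claim that the final $\mathcal U$ equals $\mathcal Y_N$ and that each list $\mathcal S(y)$, $y\in\mathcal U$, collects \emph{every} feasible preimage of $y$. I would first record two bookkeeping invariants that are immediate from the pseudocode: only points passing the \texttt{is\_feasible} test in lines 11--13 are ever handed to \nameref{algo:update_incumbent}, so every point ever stored in $\mathcal U$ is $z(x)$ for some $x\in\mathcal X$ and every solution ever inserted into $\mathcal S(y)$ lies in $z^{-1}(y)\cap\mathcal X$; and \nameref{algo:update_incumbent} keeps $\mathcal U$ an antichain for domination, deleting a stored point only when the newly found one strictly dominates it. Granting the completeness direction (so that $\mathcal Y_N$ is contained in the final $\mathcal U$), soundness follows quickly: a point $y$ in the final $\mathcal U$ cannot be strictly dominated by any feasible point, for otherwise it would be dominated by some $y^\star\in\mathcal Y_N\subseteq\mathcal U$ and the filtering loop would have removed it; hence the final $\mathcal U\subseteq\mathcal Y_N$, so $\mathcal U=\mathcal Y_N$, and then any $x\in\mathcal S(y)$ with $y\in\mathcal U$ is feasible and maps to a non-dominated point, i.e. $x\in\mathcal X_E$.

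The core is completeness, for which I would use the loop invariant: \emph{at every step of the \textbf{while} loop, for each $x^\star\in\mathcal X_E$, either $x^\star$ has already been inserted into $\mathcal S(z(x^\star))$, or $x^\star$ is feasible for at least one not-yet-processed node of $Tree$.} Then I would check that processing a node $\eta$ preserves it. If $\eta$ is fathomed by infeasibility its feasible region is empty, so it carries no $x^\star$. If $\eta$ is fathomed by dominance it carries no efficient solution at all: were $x^\star\in\mathcal X_E$ feasible for $\eta$, then $z(x^\star)\in\mathcal Y_N$ is not strictly dominated by $\mathcal U$, so there is a local nadir point $u\in\mathcal N(\mathcal U)$ with $z(x^\star)\leqslant u$ componentwise (if $z(x^\star)\in\mathcal U$, take an adjacent local nadir point, which dominates $z(x^\star)$ componentwise; at the extremes one uses the customary virtual nadir points, and the empty-$\mathcal U$ case is the virtual point $(+\infty,+\infty)$); since $x^\star$ is feasible for $\eta$, $z(x^\star)\in\mathcal L(\eta)+\R^2_{\geqq}$ by validity of the relaxed lower bound set, hence $u\in\mathcal L(\eta)+\R^2_{\geqq}$, contradicting the test of \autoref{algo:dominance_test}. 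If $\eta$ is fathomed by integrity, the integrity rule says the relaxed subproblem has a unique (integer) vertex, so $\eta$ has a single feasible point; lines 11--13 record it via \nameref{algo:update_incumbent} before the fathoming test, so any $x^\star$ feasible for $\eta$ is that point and has just been inserted into $\mathcal S$.

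If instead $\eta$ branches (\autoref{algo:pareto_branching}): a variable branching on $x'$ partitions the feasible region of $\eta$ into its $x'=0$ and $x'=1$ parts, so $x^\star$ survives in one child; an extended Pareto branching creates one child per $u\in\mathcal N(\mathcal U)$, constrained by $z_1(x)\leqslant u_1,\ z_2(x)\leqslant u_2$, and by the same componentwise argument an efficient $x^\star$ feasible for $\eta$ that is not yet recorded satisfies $z(x^\star)\leqslant u$ for some such $u$ and hence survives in that child. Combining the invariant with finiteness of $Tree$ (variable branching strictly decreases the number of free variables, and a run of consecutive Pareto branchings is finite, as in \cite{forget2022warm,gadegaard2019bi}), at termination there is no unprocessed node, so every $x^\star\in\mathcal X_E$ has been inserted into $\mathcal S(z(x^\star))$; since $z(x^\star)\in\mathcal Y_N$ it is never strictly dominated, so \nameref{algo:update_incumbent} never deletes it from $\mathcal U$ nor drops $x^\star$ from its list. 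This gives both $\mathcal X_E\subseteq\mathcal S(\mathcal U)$ and $\mathcal Y_N\subseteq\mathcal U$, the latter closing the soundness step, and altogether $\mathcal S(\mathcal U)=\mathcal X_E$.

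The hard part, I expect, is not this skeleton but making the fathoming rules interact cleanly with \emph{equivalent} efficient solutions (distinct $x$ sharing the same image). The dominance case is saved by the componentwise-dominating local-nadir argument above; the integrity case is safe only under the reading that ``unique extreme point'' means the relaxed subproblem genuinely collapses to one feasible point, so that no equivalent solution is discarded unrecorded — pinning this down (and, if the implementation only tests $|\mathcal L|=1$, arguing that this coincides with the single-vertex situation, since otherwise two integer preimages of $\mathcal L[1]$ could be lost) is the step I would treat most carefully. The remaining, more routine, gap is the finite-termination argument for extended Pareto branching interleaved with variable branching, which I would import from the cited MOB\&B literature.
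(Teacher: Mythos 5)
Your proposal is correct and strictly more thorough than the paper's own proof, but it is organized differently. The paper argues only the inclusion $\mathcal{X}_E \subseteq \mathcal{S}(\mathcal{U})$, by contradiction: an efficient $x$ missing from $\mathcal{S}(\mathcal{U})$ must have been lost at a fathomed node, and each of the three fathoming rules is ruled out in turn (infeasibility because $x$ is feasible; integrity because either $x$ is the unique integer point and is recorded, or the node has several non-dominated extreme points and is not fathomed; dominance because $z(x)\in\mathcal{L}(\eta)+\mathbb{R}^2_{\geqq}$ contradicts the test). You instead prove both inclusions, routing completeness through an explicit loop invariant and then deducing soundness ($\mathcal{S}(\mathcal{U})\subseteq\mathcal{X}_E$) from $\mathcal{Y}_N\subseteq\mathcal{U}$ plus the antichain property of \nameref{algo:update_incumbent} --- a direction the paper's proof does not address at all, even though the lemma asserts an equality. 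Your version also repairs two points the paper glosses over: (i) the paper opens with ``the branching strategy does not eliminate any feasible solution,'' which is literally false for extended Pareto branching (it does discard feasible solutions whose images are dominated); your argument that an unrecorded efficient $x^\star$ satisfies $z(x^\star)\leq u$ for some local nadir point $u$ and hence survives in the corresponding child is the statement actually needed (and, as you note, it requires the customary virtual nadir points at the extremes, which \autoref{algo:pareto_branching} does not make explicit); (ii) in the dominance case you pass through the nadir point to derive $u\in\mathcal{L}(\eta)+\mathbb{R}^2_{\geqq}$, which is the precise contradiction with \autoref{algo:dominance_test}, whereas the paper stops at $z(x)\in\mathcal{L}(\eta)+\mathbb{R}^2_{\geqq}$. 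Finally, the weakness you flag in the integrity rule --- that the test $|\mathcal{L}|=1$ does not by itself exclude a second, equivalent integer preimage of $\mathcal{L}[1]$ being silently lost --- is a genuine soft spot shared by the paper's proof, which simply asserts uniqueness; you are right to single it out as the step needing the most care.
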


\begin{proof}{Proof}
Suppose there exists $x \notin \mathcal{S}(\mathcal{U}) $ which is efficient. Since the branching strategy does not eliminate any feasible solution, then $x$ is pruned from a node in the BOBLB\&B tree by one of the three fathoming rules. As $x$ is a feasible solution, then $x$ can not be pruned from an infeasible node. We claim that $x$ cannot
be pruned from a node by integrity. Indeed either $x$ is the unique integer point in this node and then $x$ is not eliminated; or there are at least two different non-dominated extreme points in this node which should not be fathomed. Finally, if $x$ belongs to a node $\eta$ fathomed by dominance, then, as $x$ is efficient, $z(x) \in \mathcal{L}(\eta) + \mathbb{R}^2_{\geqq}$ where $\mathcal{L}(\eta)$ is the valid lower bound set of node $\eta$. This contradicts this pruning rule fathoming by dominance.
\hfill $\Box$
\end{proof}

\subsection{First experimental results}


Our BOBLB\&B algorithm is implemented in \textsc{Julia} language \cite{julia} and is integrated into the \textsc{vOptSolver} project \cite{vOptSolver}, which is an open-source software devoted to solving generic multi-objective linear integer programs. Our code is available on \url{https://github.com/Yue0925/vOptGeneric.jl/tree/master/src}.

Our experimental tests are run on the \textsc{MAGI} server of University Sorbonne Paris Nord, using one Linux machine with \verb!Intel(R) Xeon(R) Silver 4210R! \verb!CPU!, \verb!94 GB! of \verb!RAM!. To cope with BOLP relaxation during the dichotomy method, we invoke the simplex algorithm provided by IBM ILOG \textsc{Cplex} Optimizer \cite{cplex} solver, version 22.1.0, to optimize single-objective LP problems.

We execute the branching strategy with the breadth-first search order. The free variables are branched in the default order. For all instances tested, the Time Limit (TL) is 3600 CPU seconds.

\subsubsection{Instances}

In our computational tests, we run our algorithms on 2227 instances, with 2161 instances from the literature and 66 randomly generated instances.

\begin{itemize}
    \item[$\bullet$] Bi-Objective Bi-dimensional Knapsack Problem (BOBKP) instances from the \textsc{VoptLib} library \cite{vOptLib}: 35 instances.

    \item[$\bullet$] Bi-Objective Multi-Demand Multi-dimensional Knapsack Problem (BOMDMKP) from \cite{cappanera2005local}: 78 instances. The instances are originally single-objective, the second objective coefficients are generated with the linear correlation $\rho (z_1, z_2) \approx -0.92$.

    \item[$\bullet$] Multi-Objective Knapsack Problem (MOKP-A) from \cite{gadegaard2020branch}: 1200 instances.

    \item[$\bullet$] Multi-Objective Knapsack Problem (MOKP-B) from \cite{kirlik2014new}: 160 instances.

    \item[$\bullet$] Bi-Objective Set Covering Problem (BOSCP): 66 single-objective instances are randomly generated with the number of objects $n \in [20, 100]$, the number of covers is decided uniformly from $[\frac{n}{10}, \frac{3n}{10}]$ while the density of cover is also uniformly defined from $10\%$ to $30\%$. The second objective function is generated with a negative correlation with respect to the first objective coefficients $\rho (z_1, z_2) \approx -0.92$.

    \item[$\bullet$] Bi-Objective Set PArtitioning problem (BOSPA) collected in the ORlibrary \cite{beasley1990or}: 28 instances with second objective extended by \cite{gandibleux2021primal}.

    \item[$\bullet$] Multi-Objective Assignment Problem (MOAP) from  \cite{przybylski2010two,kirlik2014new}: 40 instances. 

    \item[$\bullet$] Multi-Objective Uncapacitated Facility Location Problem (MOUFLP) from \cite{gadegaard2020branch}: 620 instances.

\end{itemize}

The above Multi-Objective (MO) instances are considered as bi-objective ones by taking into account the two first objective functions.

\subsubsection{Comparison with state-of-the-art method}\label{sec:exper_bb}

We compare our BOBLB\&B algorithm with or without the extended Pareto branching (EPB), respectively denoted as EPB B\&B and B\&B to the state-of-the-art $\epsilon$-constraint method. In \autoref{tab:BOBKP_EPSILONvsBBvsEPBBB} for all categories instances, the columns indicate

\begin{itemize}
    \item[-] $n$ : the average number of variables 

    \item[-] $m$ : the average number of constraints

    \item[-] $|\mathcal{Y}_N|$ : the number of non-dominated points 

    \item[-] Time(s) : the average computation time in CPU seconds executed by algorithms

    \item[-] \#TL : the number of instances reaching the time limit among the total number of instances

    \item[-] \#Nodes : the average number of nodes explored during the algorithm 

\end{itemize}


\begin{table}
\centering
\scriptsize
\begin{tabular}{lrrrrrrrrrr}

\toprule
~& ~ & ~&~ & \textbf{\tiny{$\epsilon$-constraint}} & \multicolumn{3}{c}{\textbf{B\&B}}  & \multicolumn{3}{c}{\textbf{EPB B\&B}} \\

\cmidrule(r){5-5} \cmidrule(r){6-8} \cmidrule(r){9-11} 

\textbf{Instance}  & \textbf{n} & \textbf{m} &\textbf{$|\mathcal{Y}_N|$}  &  \textbf{Time(s)} & \textbf{Time(s)} & \textbf{\#TL} &\textbf{\#Nodes} & \textbf{Time(s)} & \textbf{\#TL}  &\textbf{\#Nodes} \\
\midrule

\multirow{3}{*}{BOBKP} & 10-20 & 2.00 &8.43 & 0.99 & \textbf{0.30} & 0/7 & 509.72 & 0.96 & 0/7 & 740.29 \\ 
~ & 30-40 & 2.00 &27.72 & \textbf{5.37} & 375.42 & 0/7 & 109992.57 & 409.18 & 0/7 & 256052.85 \\ 
~ & 50 & 2.00 &51.57 & \textbf{10.74} & 1192.92 & 2/7 & 165915.00 & 1359.82 & 2/7 & 386796.86 \\ 
\hline

\multirow{2}{*}{BOMDMKP} & 10-20 & 2.86 & 9.81 &  \textbf{4.88} & 12.71 & 0/21 & 15700.91 & 13.58 & 0/21 & 16071.00 \\
~ & 30-40  &4.84 & 31.47 & \textbf{13.74} & 2503.30 & 32/57 & 648192.34 & 2598.29 & 34/57 & 822636.19 \\
\hline

\multirow{3}{*}{MOKP-A} &  10-20& 1.00&10.64& 4.18& \textbf{2.90} & 0/1080 & 1506.12& 4.77 & 0/1080 &3738.68  \\
~ & 25-30 & 1.00 &  41.64& \textbf{7.27}& 19.21 & 0/60 & 8402.90 & 35.71 & 0/60  & 42977.72 \\
~ & 35-40 & 1.00 &82.85 &\textbf{12.63} & 96.33 & 0/60  & 27932.67 & 200.66 & 0/60  & 210114.13  \\
\hline

\multirow{5}{*}{MOKP-B} & 10-20 & 1.00 & 7.80& \textbf{3.84}& 6.91& 0/60 & 6516.04 & 7.59 &  0/60  & 6733.68 \\
~ & 30-40& 1.00& 26.78 & \textbf{5.91} & 354.98& 0/40 & 181982.95 & 496.96 & 0/40 &490901.95  \\
~ &50-60& 1.00 & 56.75 & \textbf{10.01} &  2772.78& 9/20 & 547773.50 & 3111.45 & 15/20 & 1.93375e6\\
~ & 70-80 &1.00& 100.60 & \textbf{16.93}& 3600.98 & 20/20 & 241904.20 &3601.47 & 20/20 & 1.71603e6  \\
~ & 90-100 & 1.00& 140.10 & \textbf{24.21} & 3600.95 & 20/20 & 143654.90 & 3600.93 & 20/20 & 1.534965e6 \\
\hline

\multirow{3}{*}{BOBSCP} & 20-40 & 6.00 &8.59 & 4.01 & \textbf{2.10} & 0/24 & 374.42 & 3.15 & 0/24 & 1928.34 \\ 
~ & 60-80 & 13.44 & 20.00 & \textbf{6.92} & 16.26 & 0/27 & 3144.93 & 31.03 & 0/27 &17551.59 \\
~ & 100 & 22.00 & 27.13 &\textbf{9.74} & 114.47 & 0/15 & 9436.73 & 141.45 & 0/15 & 62818.93 \\
\hline

\multirow{3}{*}{BOSPA} & 50-500 & 21.17 & 9.00 &  \textbf{4.91} & 15.50 & 0/6 & 3114.33 & 78.43 & 0/6 & 8317.17  \\
~ & 550-1500 & 21.94 & 14.31 & \textbf{6.64} & 249.19 & 0/16 & 16997.62 & 2187.54 & 8/16 & 75361.88 \\
~ & 1600-3000 & 25.50&13.33 & \textbf{8.41} & 1505.24 & 1/6 & 28945.67 &3601.42 & 6/6 & 58056.83  \\
\hline

\multirow{2}{*}{MOAP} & 100-225 & 25.00 & 26.75 & \textbf{6.66} & 60.94 & 0/20 &12436.50 & 253.89 & 0/20 & 49937.15  \\ 
~ & 400-625 & 45.00 & 61.15 & \textbf{11.72} & 2262.61 & 9/20 & 104902.20 & 3350.48 & 16/20 &297145.00 \\
\hline

\multirow{4}{*}{MOUFLP} & 30-42 & 41.08 & 26.70 & 6.63 & \textbf{4.11} & 0/130 & 2189.31 & 13.74 & 0/130 & 10345.31 \\
~ & 56-72& 57.23 & 43.77 & \textbf{9.68} & 11.48 & 0/130 & 5169.66 &  50.77 & 0/130 & 33243.05   \\
~ & 110 - 156& 133.00 & 10.22 & \textbf{4.52} & 9.61 & 0/240 & 2163.35 &15.05 & 0/240 &5224.98  \\
~ & 210& 210.00 & 18.14 & \textbf{7.03}&  50.17& 0/120 & 8181.37& 111.28& 0/120 & 27683.61\\

\bottomrule
\end{tabular}
\caption{A comparison of the performances of $\epsilon$-constraint, BOBLB\&B without and with EPB algorithm.}
\label{tab:BOBKP_EPSILONvsBBvsEPBBB}
\end{table}




\autoref{tab:BOBKP_EPSILONvsBBvsEPBBB} indicates the best running time in bold among $\epsilon$-constraint, B\&B and EPB B\&B methods.  It appears that $\epsilon$-constraint outperforms the other methods for the larger instances, while, for small ones, BOBLB\&B tree completes the enumeration process much faster. Indeed, the $\epsilon$-constraint method repeatedly solves a single-objective ILP in $\mathcal{O}(|\mathcal{Y}_N|)$ time, which is quite time consuming in general, but not for these above rather small instances solved efficiently by ILP solver. On the opposite, BOBLB\&B methods solve the BOLP relaxation problem within a branching tree, in which the tree size grows exponentially with the number of variables linearly increased. The ambition of this article, is to sufficiently cut down the branching tree to reduce the number of nodes and the total time.

The advantage of the $\epsilon$-constraint method relies on $\mathcal{O}(|\mathcal{Y}_N|)$ resolutions using an ILP solver with whose efficiency comes from cutting, pre-processing, reductions, heuristics, and other advanced techniques embedded inside. We aim to devise such advanced ILP techniques inside the BOBLB\&B methods to directly benefit for the unique branching tree. Note that, for instances where an optimal single-objective ILP solution is hard to achieve, the $\epsilon$-constraint method will solve a large number of slow single-objective ILPs which will impact its global performance.

During the BOBLB\&B tree exploration, we find that computing exhaustively the BOLP relaxation occupies more than 90\% of the total time consumed, as all extreme non-dominated points of BOLP are exhaustively enumerated. Therefore, we propose an alternative relaxation to speed up the whole process by searching a tradeoff between the LBS exactness and the saved computation time.


In the literature, EPB has been shown to be efficient, which is not the case here. Indeed, the global UBS is initialized with an empty set and is updated only from BOLP relaxation, the EPB starts to branch on feasible points far away from LBS. Consequently, EPB in the BOBLB\&B algorithm leads to more nodes and slows down the enumeration process. We propose to both reinforce the UBS and the LBS benefiting from the ILP techniques.




\section{Cutting methods for bi-objective integer programming}\label{sec:BOBC}

As shown numerically in the last section, the BOBLB\&B algorithm remains far from being competitive with the state-of-the-art $\epsilon$-constraint method. Analogously to the single-objective B\&B algorithm, the good quality of the bound sets is crucial to prune more redundant nodes. To reinforce the relaxation bound set, we present in this section two new generic cutting plane frameworks. The first idea is simultaneously cut several LBS points. The second idea is to use the cuts generators of ILP solvers.

\subsection{Polyhedral backgrounds}

The separation problem of single-objective combinatorial optimization is to find a valid inequality separating a fractional point from the convex hull of the feasible set. In bi-objective context, we would like to cut off several points of a given lower bound set $\mathcal{L}$. A tight LBS is the inequalities defining the dominant of the feasible points, i.e. $Dom(\mathcal{Y})=(\mathcal{Y}+\mathbb{R}^2_{\geqq})_N$. Since obtaining these inequalities  (which give $conv(\mathcal{Y})_N$) is hard, we will approach this convex hull by computing valid inequalities defined as follows.

An inequality in solution space $\alpha x\leqslant \beta$, with $\alpha \in \mathbb{R}^n$, $\beta \in \mathbb{R}$, is \textit{valid} for $conv(\mathcal{X})$ if $\forall x\in\mathcal{X}$, $\alpha x\leqslant \beta$. Analogously, an inequality in criteria space $\pi y \leqslant \omega$ with $\pi \in \mathbb{R}^2 , \, \omega\in \mathbb{R}$ is \textit{valid} for $conv(\mathcal{Y})$, if $\forall x\in \mathcal{X}$, $\pi z(x)\leqslant\omega$. Note that this is equivalent to say that $\forall y\in \mathcal{Y}$, $\pi y \leqslant \omega$.

Let us first remark that adding valid inequalities for either  $conv(\mathcal{X})$ or $conv(\mathcal{Y})$ corresponds to add valid inequalities for $conv(\mathcal{Y})$ in criteria space.

\begin{lemma}\label{lem:proj_valid_inq}
The linear projection $z(\cdot)$ of a valid inequality $\alpha x \leqslant \beta $ for $conv(\mathcal{X})$ is a polyhedron $\mathcal{P}$ in the criteria space. Moreover the inequalities defining $Dom(\mathcal{P}$) are valid for $conv(\mathcal{Y})$.
\end{lemma}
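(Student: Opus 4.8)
The plan is to read the hypothesis that $\alpha x\leqslant\beta$ is valid for $conv(\mathcal{X})$ geometrically as $conv(\mathcal{X})\subseteq H$, where $H:=\{x\in\mathbb{R}^n:\alpha x\leqslant\beta\}$, to set $\mathcal{P}:=z(H)$ (informally, the projection of the inequality through $z$), and then to reduce everything to the inclusion chain
\[
 conv(\mathcal{Y})\ \subseteq\ \mathcal{P}\ \subseteq\ Dom(\mathcal{P}).
\]
Granting this chain, the second assertion of the lemma is immediate: any inequality $\pi y\leqslant\omega$ taken from a finite description of $Dom(\mathcal{P})$ is valid for $Dom(\mathcal{P})$, hence a fortiori for the smaller set $conv(\mathcal{Y})$, i.e. it holds for every $y\in\mathcal{Y}$ --- which is precisely what it means to be valid for $conv(\mathcal{Y})$ in the sense fixed above.

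Two points then remain. First, $\mathcal{P}=z(H)$ is a polyhedron: this is the classical fact that the linear image of a polyhedron is a polyhedron, which I would obtain either by Fourier--Motzkin elimination of $x$ from $\{(y,x)\in\mathbb{R}^2\times\mathbb{R}^n:\ y=z(x),\ \alpha x\leqslant\beta\}$, or from the Minkowski--Weyl decomposition of the half-space $H$ pushed through $z$; the degenerate cases $\alpha=0$, where $H$ equals $\mathbb{R}^n$ or $\emptyset$, are trivial and I would dispose of them first. Since $\mathbb{R}^2_{\geqq}$ is a polyhedral cone, $Dom(\mathcal{P})=\mathcal{P}+\mathbb{R}^2_{\geqq}$ is then a polyhedron as well, so a finite inequality description of $Dom(\mathcal{P})$ indeed exists. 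Second, $conv(\mathcal{Y})\subseteq\mathcal{P}$: the map $z$ is linear and therefore commutes with convex hulls, so $z(conv(\mathcal{X}))=conv(z(\mathcal{X}))=conv(\mathcal{Y})$; combined with $conv(\mathcal{X})\subseteq H$ this gives $conv(\mathcal{Y})=z(conv(\mathcal{X}))\subseteq z(H)=\mathcal{P}$. The inclusion $\mathcal{P}\subseteq\mathcal{P}+\mathbb{R}^2_{\geqq}=Dom(\mathcal{P})$ is trivial, which closes the chain.

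I do not anticipate a genuine obstacle here: the argument is a three-term inclusion together with the standard fact that a linear image of a polyhedron is a polyhedron, which is the only step deserving a citation rather than a one-line remark. If one additionally wants the sharper statement that is actually exploited in the cutting framework --- namely that the nontrivial facets of $Dom(\mathcal{P})$ are exactly the facets $\pi y\leqslant\omega$ of $\mathcal{P}$ with $\pi\geqq 0$, i.e. the inequalities oriented along the dominance cone --- this follows from the usual description of the dominant of a polyhedron, but it is not needed for the validity claim itself and I would keep it to a remark.
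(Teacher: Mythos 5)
Your argument is correct and follows essentially the same route as the paper's own proof: both identify $\mathcal{P}=z(\mathcal{H})$ as the linear image of the halfspace, use that $z$ maps $conv(\mathcal{X})$ onto $conv(\mathcal{Y})$ to get $conv(\mathcal{Y})\subseteq\mathcal{P}\subseteq\mathcal{P}+\mathbb{R}^2_{\geqq}$, and conclude validity of the inequalities defining the dominant. You merely spell out the steps the paper leaves implicit (Fourier--Motzkin for polyhedrality of the image, the degenerate case $\alpha=0$, and commutation of $z$ with convex hulls), so no further comparison is needed.
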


\begin{proof}{Proof}
In solution space, a valid inequality $\alpha x \leqslant \beta$ defines the halfspace $\mathcal{H} = \{x \in \mathbb{R}^n | \alpha x \leqslant \beta \}$. The linear projection $z(\cdot)$ of $\mathcal{H}$ in criteria space is a polyhedron $\mathcal{P} = z(\mathcal{H})$. Since the linear projection $z(\cdot)$ of the convex polytope $conv(\mathcal{X})$ is still a convex polytope $conv(\mathcal{Y})$ and  $\alpha x \leqslant \beta$ is valid for $conv(\mathcal{X})$, then $ conv(\mathcal{Y}) \subset \mathcal{P}+\mathbb{R}^2_{\geqq} $. Moreover the dominant boundaries $Dom(\mathcal{\mathcal{P}})$ are valid for $conv(\mathcal{Y})$.  \hfill$\Box$
\end{proof}

\begin{figure}
\begin{subfigure}[b]{0.45\textwidth}
\centering
\begin{tikzpicture}[scale=.4, domain=1:10,
bluenode/.style={shape=circle, fill=blue!70, draw=blue!70, minimum size=0.1pt, scale = 0.3},
blacknode/.style={shape=circle, fill=black!40, draw=black!40, minimum size=0.1pt, scale=0.3},
greennode/.style={shape=circle, fill=black!50!green, draw=black!50!green, minimum size=0.1pt, scale=0.3},
rednode/.style={shape=circle, fill=red!50, draw=red!50, minimum size=0.1pt, scale=0.3}
]
\def \xmin{1} \def \ymin{1} \def \xmax{10} \def \ymax{10}
\filldraw [draw=black!50!green, dashed, fill=black!30!green!30, ] (2.2, 4.5) -- (8.5, 1.2) -- (10, 1.2) -- (10, 2.5) -- (8.3, 4) -- cycle;
\node[black!50!green, scale=.8] at (5.5, 3.8) {$\mathcal{P}$};
\draw[black!50!green, thick] (2.2, 4.5) -- (8.5, 1.2) -- (10, 1.2);
\node[black!50!green, scale=.8] at (8.5, 0.5) {$Dom(\mathcal{P})$};

\draw[draw=red!50] (1.5, \ymax)--(1.5,7.5)--(2.2,4.5)--(5,2.5)--(9,1.5)--(\xmax, 1.5) ;
\filldraw [draw=blue!50, fill=blue!30] (3,8) -- (6,8) -- (8,7) -- (9,5) -- (8,3)--(4,5) -- (3,6) -- cycle;

\draw[->] (\xmin,\ymin) -- (\xmax,\ymin) node[right] {$z_1$}; 
\draw[->] (\xmin,\ymin) -- (\xmin, \ymax) node[left] {$z_2$};

\node[rednode] at (1.5,7.5) () {};
\node[rednode] at (1.85,6) () {};
\node[rednode] at (2.2, 4.5) () {};
\node[greennode] at (5, 2.5) () {};
\node[black!50!green, scale=.8] at (5, 2) {$\tilde{y}$};
\node[rednode] at (8,1.75) () {};
\node[rednode] at (9,1.5) () {};
\node[red!50, scale=.8] at (3,3) {$\mathcal{\widetilde{Y}}_N$};

\node[blue!50, scale=.8] at (8, 8) {$conv (\mathcal{Y})$};
\node[blacknode] at (3,6) () {};
\node[blacknode] at (4,5) () {};
\node[blacknode] at (3,8) () {};
\node[blacknode] at (5,7) () {};
\node[blacknode] at (6,5) () {};
\node[blacknode] at (6,8) () {};
\node[blacknode] at (9,5) () {};
\node[blacknode] at (7,6) () {};
\node[blacknode] at (8,3) () {};
\node[blacknode] at (8,7) () {};

\end{tikzpicture}

  \caption{ }
\end{subfigure}
 \hfill
\begin{subfigure}[b]{0.45\textwidth}
\centering
\begin{tikzpicture}[scale=.4, domain=1:10,
bluenode/.style={shape=circle, fill=blue!70, draw=blue!70, minimum size=0.1pt, scale = 0.3},
blacknode/.style={shape=circle, fill=black!40, draw=black!40, minimum size=0.1pt, scale=0.3},
greennode/.style={shape=circle, fill=black!50!green, draw=black!50!green, minimum size=0.1pt, scale=0.3},
rednode/.style={shape=circle, fill=red!50, draw=red!50, minimum size=0.1pt, scale=0.3}
]
\def \xmin{1} \def \ymin{1} \def \xmax{10} \def \ymax{10}

\draw[draw=red!50] (1.5, \ymax)--(1.5,7.5)--(2.2,4.5)--(7,2)--(9,1.5)--(\xmax, 1.5) ;
\filldraw [draw=blue!50, fill=blue!30] (3,8) -- (6,8) -- (8,7) -- (9,5) -- (8,3)--(4,5) -- (3,6) -- cycle;

\draw[->] (\xmin,\ymin) -- (\xmax,\ymin) node[right] {$z_1$}; 
\draw[->] (\xmin,\ymin) -- (\xmin, \ymax) node[left] {$z_2$};

\node[rednode] at (1.5,7.5) () {};
\node[rednode] at (1.85,6) () {};
\node[rednode] at (2.2, 4.5) () {};
\node[greennode] at (5, 2.5) () {};
\node[black!50!green, scale=.8] at (5, 2) {$\tilde{y}$};
\node[rednode] at (7,2) () {};
\node[rednode] at (8,1.75) () {};
\node[rednode] at (9,1.5) () {};
\node[red!50, scale=.8] at (3,3) {$\mathcal{\widetilde{Y}}_N$};

\node[blue!50, scale=.8] at (8, 8) {$conv (\mathcal{Y})$};
\node[blacknode] at (3,6) () {};
\node[blacknode] at (4,5) () {};
\node[blacknode] at (3,8) () {};
\node[blacknode] at (5,7) () {};
\node[blacknode] at (6,5) () {};
\node[blacknode] at (6,8) () {};
\node[blacknode] at (9,5) () {};
\node[blacknode] at (7,6) () {};
\node[blacknode] at (8,3) () {};
\node[blacknode] at (8,7) () {};

\end{tikzpicture}
  \caption{ }
\end{subfigure}
\caption{New LBS of the intersection between $\widetilde{\mathcal{Y}}_N$ and $Dom(\mathcal{P})$
}
\label{fig:image_valid_inq}
\end{figure}
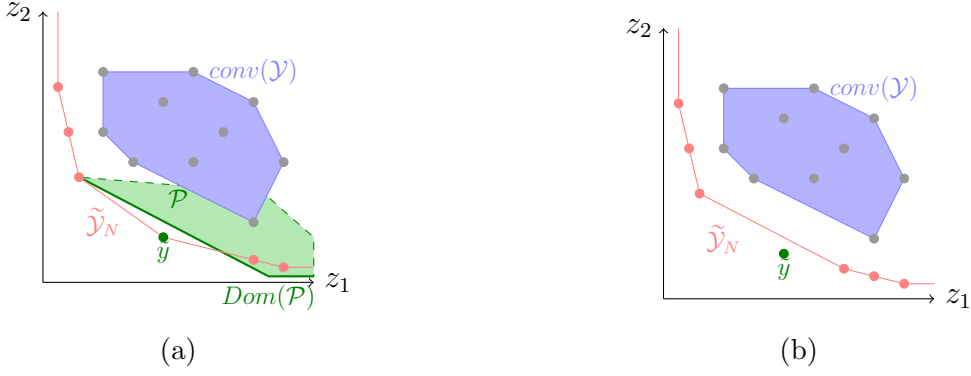

Our purpose is to generate a set of valid inequalities cutting the lower bound set $\mathcal{L}$. By \autoref{lem:proj_valid_inq}, the convex intersection of a LBS $\mathcal{L}$ and $Dom(\mathcal{P})$ forms a tighter 
lower bound set than $\widetilde{\mathcal{Y}}_N$. \autoref{fig:image_valid_inq} illustrates in (a) a polyhedron $\mathcal{P}$ in the criteria space defined as the image of a valid inequality in the solution space and (b) the resulting LBS from the intersection between the former LBS and the boundaries of $Dom(\mathcal{P})$.

\begin{problem}\label{prob:1}(\textbf{Solution space separation problem})
Given a solution $\tilde{x} \in \widetilde{\mathcal{X}}$, return $(\True, \,  \alpha x\leqslant \beta)$ if there exists a valid inequality separating $\tilde{x}$ from $conv(\mathcal{X})$ i.e. $\alpha \tilde{x} > \beta$. Otherwise the output is $(\False, \, \emptyset)$.
\end{problem}

\begin{problem}\label{prob:2}(\textbf{Criteria space separation problem})
Given a point $\tilde{y} \in \widetilde{\mathcal{Y}}$, return $(\True, \,  \pi y \leqslant \omega)$ if there exists a valid inequality separating $\tilde{y}$ from $conv(\mathcal{Y})$ i.e. $\pi \tilde{y} > \omega$. Otherwise the output is $(\False, \, \emptyset)$.
\end{problem}

The difficulty of \autoref{prob:2} in searching for a valid inequality directly from the criteria space is to prove the validity of the inverse image $z^{-1}(\pi y \leqslant \omega)$ in solution space. \cite{bokler2024outer} proposed a single objective Mixed Integer Linear Problem (MILP) to search a valid inequality $\pi y \leqslant \omega$ by verifying the whole feasible set $\mathcal{X}$. A tight LBS is obtained by a cutting-plane algorithm repeatedly solving such a MILP problem. Unfortunately, this approach is highly time consuming.

We propose another separation problem which combines the two previous separation problems.

\begin{problem}\label{prob:3}(\textbf{Multi-objective separation problem}) Let us consider a relaxation set $\mathcal{X}'$ such that $conv(\mathcal{X}) \subseteq \mathcal{X}' \subseteq \widetilde{\mathcal{X}}$.
Given ($\lambda, \tilde{x}, \, \tilde{y}=z(\tilde{x})$) where $\lambda \in \mathbb{R}^2_{\geqq}$, $\tilde{x} \in \argmin_{x \in \mathcal{X}'} \lambda^Tz(x)$, return $\True$ if there exists a valid inequality violating $\tilde{x}$ in solution space, or there exists a valid inequality violating $\tilde{y}$ in criteria space. Otherwise the output is $\False$.
\end{problem}

To answer \autoref{prob:3}, the following lemma proposes to generate inequalities in the criteria space.

\begin{lemma}\label{lem:lambda_valid_cut}
Let $\lambda \in \mathbb{R}^2_{\geqq}$ be a positive vector and $\tilde{x} \in \argmin_{x \in {\mathcal{X}'}} \lambda^T z(x)$. Consider a polyhedron $\widehat{\mathcal{X}}$ obtained from $\mathcal{X}'$ by adding inequalities of an oracle separating $\tilde{x}$ and $\mathcal{X}'$, and $\hat{x} \in \argmin_{x \in {\widehat{\mathcal{X}}}} \lambda^T z(x)$. The inequality 
\begin{equation}\label{eq:1}
\lambda^\perp y \geqslant \lambda^\perp z(\hat{x}) 
\end{equation}
is a valid inequality in criteria space separating $\tilde{y} = z(\tilde{x})$ from $conv(\mathcal{Y})$.
\end{lemma}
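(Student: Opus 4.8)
The plan is to establish two things: first, that the inequality \eqref{eq:1} is valid for $conv(\mathcal{Y})$; and second, that it is actually violated by $\tilde{y} = z(\tilde{x})$. I will denote by $\lambda^\perp$ a vector orthogonal to $\lambda$ (fixed once and for all, say rotated so that the halfspace $\lambda^\perp y \geqslant \lambda^\perp z(\hat x)$ is the one pointing "away" from the direction in which $\lambda^T z$ decreases). The key structural observation is that $\hat x$ minimizes $\lambda^T z(x)$ over $\widehat{\mathcal{X}}$, and $conv(\mathcal{X}) \subseteq \widehat{\mathcal{X}}$, since $\widehat{\mathcal{X}}$ is obtained from $\mathcal{X}'$ (which already contains $conv(\mathcal{X})$) by adding inequalities that an oracle certified to be valid for $\mathcal{X}'$ while separating $\tilde x$; a fortiori those inequalities are valid for $conv(\mathcal{X})$.

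For validity: take any $x \in \mathcal{X}$, so $x \in conv(\mathcal{X}) \subseteq \widehat{\mathcal{X}}$, and therefore $\lambda^T z(x) \geqslant \lambda^T z(\hat x)$ by optimality of $\hat x$. Now I need to translate this statement about the functional $\lambda^T$ into the statement $\lambda^\perp z(x) \geqslant \lambda^\perp z(\hat x)$. This is not automatic in general — a single linear inequality $\lambda^T z(x) \geqslant \lambda^T z(\hat x)$ does not by itself imply anything about $\lambda^\perp z(x)$. The point that makes it work here is Pareto-style reasoning restricted to $\mathcal{Y}$: since $\lambda \in \mathbb{R}^2_{\geqq}$ is a positive weight, minimizing $\lambda^T z$ over $\mathcal{X}$ (equivalently over $conv(\mathcal{X})$) yields a supported non-dominated point, and in the bi-objective plane the non-dominated frontier of $conv(\mathcal{Y})$ lies entirely on one side of the supporting line through $z(\hat x)$ with normal $\lambda$ — and, crucially, $conv(\mathcal{Y})$ itself lies in the "upper-right" cone anchored at that supporting line in the $\lambda^\perp$ direction as well, because $z(\hat x)$ is an extreme point of $conv(\mathcal{Y})$ in direction $-\lambda$. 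I would make this precise by arguing that for every $x\in\mathcal X$ we have $z(x) \in z(\hat x) + \mathrm{cone}\{\lambda^\perp, \text{(the other edge direction)}\}$, hence in particular $\lambda^\perp z(x) \geqslant \lambda^\perp z(\hat x)$. Equivalently, and perhaps more cleanly: the halfspace $\{y : \lambda^\perp y \geqslant \lambda^\perp z(\hat x)\}$ is precisely $Dom(\mathcal{Q})$ for a suitable polyhedron $\mathcal{Q}$ arising as the $z(\cdot)$-image of the valid inequalities added to form $\widehat{\mathcal{X}}$, and then validity follows directly from \autoref{lem:proj_valid_inq}. I expect this identification of the correct orientation of $\lambda^\perp$, and the verification that $conv(\mathcal Y)$ sits on the intended side, to be the main obstacle — everything else is bookkeeping.

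For the violation claim: by hypothesis the oracle separates $\tilde x$ from $\mathcal{X}'$, meaning at least one added inequality is strictly violated by $\tilde x$, so $\tilde x \notin \widehat{\mathcal{X}}$. Since $\tilde x \in \argmin_{x\in\mathcal{X}'}\lambda^T z(x)$ but $\tilde x$ is no longer feasible for the tighter problem, the optimal value strictly increases (or at least the minimizer moves): more carefully, I would argue that because $\tilde y = z(\tilde x)$ attains the minimum of $\lambda^T$ over $z(\mathcal{X}')$ and $\tilde x$ is cut off, $\tilde y$ lies strictly outside $z(\widehat{\mathcal{X}}) + \mathbb{R}^2_{\geqq}$ on the relevant side, which after projecting onto the $\lambda^\perp$ axis gives $\lambda^\perp \tilde y < \lambda^\perp z(\hat x)$. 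The cleanest route is: $\lambda^T z(\hat x) \geqslant \lambda^T z(\tilde x) = \lambda^T \tilde y$ trivially, and I want to upgrade this; since $\hat x$ and $\tilde x$ both lie on the same supporting line direction issue, I combine the weak inequality in the $\lambda$ direction with the fact that $\hat x$ is a \emph{new} extreme point strictly separated from $\tilde y$ by the cut, to conclude the strict inequality $\lambda^\perp \tilde y < \lambda^\perp z(\hat x)$, i.e. \eqref{eq:1} is violated at $\tilde y$. I would finish by noting this is exactly what \autoref{prob:3} asks for. $\Box$
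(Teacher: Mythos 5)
Your validity argument opens exactly where the paper's does: the oracle's inequalities are valid for $conv(\mathcal{X})$, so $conv(\mathcal{X}) \subseteq \widehat{\mathcal{X}}$, and optimality of $\hat{x}$ over $\widehat{\mathcal{X}}$ gives $\lambda^T z(x) \geqslant \lambda^T z(\hat{x})$ for every $x \in \mathcal{X}$. The paper essentially stops there, because in its notation $\lambda^\perp y$ denotes the linear form whose level set is the hyperplane \emph{perpendicular to} $\lambda$ through $\hat{y}$ --- that is, the form $\lambda^T y$ (see the definition of $\mathcal{H}_\lambda$ in the paper's proof, the halfspace $\mathcal{H}^\lambda_{\geqq}$ in \autoref{sec:solver_cuts}, \autoref{algo:LBS_update_intersect} which deletes points with $\lambda^T y^{\lambda'} < \lambda^T y^{\lambda}$, and \autoref{fig:cuts_solvers_subfig1}, where the dashed cut is orthogonal to the arrow $\lambda$). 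You instead read $\lambda^\perp$ literally as a vector orthogonal to $\lambda$ and then devote most of the proof to ``translating'' the $\lambda^T$-inequality into a $\lambda^\perp$-inequality. That step would fail: with $\lambda^\perp \perp \lambda$, the boundary of $\{y : \lambda^\perp y \geqslant \lambda^\perp \hat{y}\}$ is a line through $\hat{y}$ parallel to $\lambda$, which in general slices through $conv(\mathcal{Y})$ whichever way you orient $\lambda^\perp$ (your orientation rule is vacuous, since the direction $-\lambda$ in which $\lambda^T z$ decreases is parallel to that boundary and so does not distinguish the two sides). The cone containment $z(x) \in z(\hat{x}) + \mathrm{cone}\{\lambda^\perp,\dots\}$ is likewise unjustified: $\hat{y}$ is a point of a relaxation, need not lie in $conv(\mathcal{Y})$, and even for a genuine vertex of $conv(\mathcal{Y})$ the tangent cone's edges are the adjacent edge directions, not $\lambda^\perp$. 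The detour through \autoref{lem:proj_valid_inq} does not rescue this either, since the oracle's inequalities are explicitly not assumed to be retrievable. In short, the ``main obstacle'' you flag is an artifact of misreading the notation; once $\lambda^\perp y$ is read as $\lambda^T y$, your first paragraph already is the validity proof and nothing remains to translate.

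On the separation claim, your chain coincides with the paper's ($\tilde{x} \notin \widehat{\mathcal{X}}$, hence $\lambda^T \tilde{y} < \lambda^T \hat{y}$), but your attempted upgrade from the weak inequality $\lambda^T \tilde{y} \leqslant \lambda^T \hat{y}$ --- ``combine it with the fact that $\hat{x}$ is a new extreme point strictly separated by the cut'' --- is not an argument: if $\mathcal{X}'$ admits an alternative minimizer of $\lambda^T z(\cdot)$ that survives in $\widehat{\mathcal{X}}$, then $\lambda^T \tilde{y} = \lambda^T \hat{y}$ and $\tilde{y}$ sits on the hyperplane rather than strictly below it. Strictness really needs the optimal value to increase after adding the cuts, which does not follow from $\tilde{x} \notin \widehat{\mathcal{X}}$ alone. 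To be fair, the paper's one-line assertion has the same defect, so this is a gap you share with the source rather than one you introduce; but your proposal does not close it.
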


\begin{proof}{Proof}
Denote $\mathcal{H}_\lambda = \{y \in \mathbb{R}^2 |\lambda^\perp  y = \lambda^\perp z(\hat{x}) \}$ the hyperplane passes through the point $\hat{y} = z(\hat{x})$ and is perpendicular with the normal $\lambda$. 
Since $conv(\mathcal{X}) \subseteq \widehat{\mathcal{X}}$, $\lambda^T y \geqslant \lambda^T \hat{y}$, $\forall y \in \mathcal{Y}$, hence $ \lambda^\perp y \geqslant \lambda^\perp z(\hat{x})  $ is valid for $conv(\mathcal{Y})$. $\tilde{y}$ is violated by the hyperplane $\mathcal{H}_\lambda$, as $\tilde{x} \notin \widehat{\mathcal{X}}$, $\lambda^T \tilde{y} < \lambda^T \hat{y}$.
 \hfill$\Box$
\end{proof}

Note that, in the assumption of \autoref{lem:lambda_valid_cut}, the additional inequalities of $\widehat{\mathcal{X}}$ can either be added to a relaxation of $conv(\mathcal{X})$ or not. Indeed, they are obtained from an oracle which can be an exterior ILP solver, and it is hard to retrieve them explicitly in practice.

In the rest of this section we propose, in \autoref{sec:multipoint_cuts}, to simultaneously separate, in solution space, the numerous LBS points, and in \autoref{sec:solver_cuts}, to use \autoref{lem:lambda_valid_cut} to generate valid inequalities for the criteria space with the help of the strength of ILP solvers. Finally, in \autoref{sec:cuts_aggregation}, we discuss the way to achieve a better balance between the exactness of LBS and the computation time.

\subsection{Multi-point cutting plane}\label{sec:multipoint_cuts}

Considering the large BOLP non-dominated boundaries $\widetilde{\mathcal{Y}}_N$, rather than cutting individually every extreme point in $\widetilde{\mathcal{Y}}_N$, we would like to generate an efficient valid inequality in solution space that separates multiple lower bounds in the criteria space simultaneously. In this section we design a generic multi-point cutting plane scheme for bi-objective problems.

\subsubsection{Multi-point separation}

We define below the multi-point separation problem for bi-objective context.

\begin{problem}\label{def:multi_point_sep}(\textbf{Multi-objective multi-point separation problem})
Consider a lower bound set $\mathcal{L}$, a dictionary $\mathcal{S}$ associating all equivalent solutions imaging the same point in the criteria space, and $\mathcal{L}' \subseteq \mathcal{L}$ a subset of the lower bound set. Return $\True$ if there exists a valid inequality in criteria space separating all points in $\mathcal{L}'$ from $\mathcal{Y}$, or equivalently if there exists a valid inequality in solution space separating all points in $\bigcup_{y \in \mathcal{L}'} \mathcal{S}(y)$ from $\mathcal{X}$; otherwise return $\False$.
\end{problem}

\cite{ben2006constraint} show that deciding whether there exists an inequality separating multiple points in solution space is polynomial whenever separating one point is. However it is known that separating efficient inequality classes is often $\mathcal{NP}$-complete. Separating a point $y$ in the criteria space requires separating all equivalent solutions $\mathcal{S}(y)$, which is generally difficult to obtain. Separating multiple points in criteria space is then even harder.

Considering the two points on the left and right $\tilde{y}^l$ and $\tilde{y}^r$ of a given subset $\mathcal{L}'$, we propose to heuristically separate only $\mathcal{S}(\tilde{y}^l)$ and $\mathcal{S}(\tilde{y}^r)$. Cutting off $\mathcal{S}(\tilde{y}^l)$ and $\mathcal{S}(\tilde{y}^r)$ does not necessarily separate $conv(\mathcal{L}')$, but it is possible that the image $z(\cdot)$ of such a valid inequality $\alpha x \leqslant \beta$ would separate all points under the convex combination of $\tilde{y}^l$ and $\tilde{y}^r$ as illustrated in \autoref{fig:cuts_multi_points}. The efficiency of our multi-point separation heuristic is experimentally shown in \autoref{sec:exp_multi_cuts}.

\begin{figure}
    \centering
\begin{tikzpicture}[scale=.4, domain=1:10,
bluenode/.style={shape=circle, fill=blue!70, draw=blue!70, minimum size=0.1pt, scale = 0.3},
blacknode/.style={shape=circle, fill=black!40, draw=black!40, minimum size=0.1pt, scale=0.3},
greennode/.style={shape=circle, fill=black!50!green, draw=black!50!green, minimum size=0.1pt, scale=0.3},
rednode/.style={shape=circle, fill=red!50, draw=red!50, minimum size=0.1pt, scale=0.3}
]
\def \xmin{1} \def \ymin{1} \def \xmax{10} \def \ymax{10}
\draw[draw=red!50] (1.5, \ymax)--(1.5,7.5)--(2.2,4.5)--(5,2.5)--(9,1.5)--(\xmax, 1.5) ;
\filldraw [draw=blue!50, fill=blue!30] (3,8) -- (6,8) -- (8,7) -- (9,5) -- (8,3)--(4,5) -- (3,6) -- cycle;

\draw[->] (\xmin,\ymin) -- (\xmax,\ymin) node[right] {$z_1$}; 
\draw[->] (\xmin,\ymin) -- (\xmin, \ymax) node[left] {$z_2$};

\node[rednode] at (1.5,7.5) () {};
\node[rednode] at (1.85,6) () {};
\node[rednode] at (2.2, 4.5) () {};
\node[rednode] at (5, 2.5) () {};
\node[rednode] at (8,1.75) () {};
\node[rednode] at (9,1.5) () {};
\node[red!50, scale=.8] at (3,3) {$\mathcal{\widetilde{Y}}_N$};

\node[blue!50, scale=.8] at (8, 8) {$conv (\mathcal{Y})$};
\node[blacknode] at (3,6) () {};
\node[blacknode] at (4,5) () {};
\node[blacknode] at (3,8) () {};
\node[blacknode] at (5,7) () {};
\node[blacknode] at (6,5) () {};
\node[blacknode] at (6,8) () {};
\node[blacknode] at (9,5) () {};
\node[blacknode] at (7,6) () {};
\node[blacknode] at (8,3) () {};
\node[blacknode] at (8,7) () {};

\node[red!50, scale=.8] at (1.5,6) {$\tilde{y}_1$};
\node[red!50, scale=.8] at (2.2, 4) {$\tilde{y}_2$};
\node[red!50, scale=.8] at (5, 2) {$\tilde{y}_3$};

\filldraw [draw=red!50, fill=red!30](1.85,6) -- (2.2, 4.5) -- (5, 2.5) -- cycle;
\draw[black!50!green, thick, dashed] (2, 9) -- (1.2, 7.5) -- (3.6, 4.2) -- (7.5,1.4) -- (9, 1.4);
\node[black!50!green, scale=.8] at (8.5,2.5) {$\big(z( \alpha x \leq \beta)\big)_N$ };

\end{tikzpicture}
    \caption{An example of the multi-point separation in bi-criteria space.}
    \label{fig:cuts_multi_points}
\end{figure}
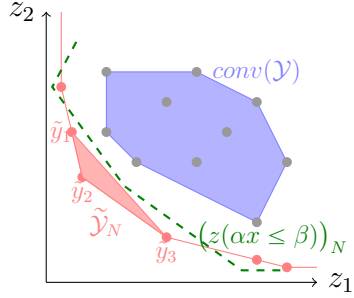

\subsubsection{Multi-point cutting plane algorithm}

Our heuristic algorithm is presented in \autoref{algo:MP_cutting_plane}. Lines 2-14 go through $\mathcal{L}$ of the current node $\eta$ in a natural order, and group the $\Delta = |\mathcal{L}'|$ consecutive points passed in separators to see if there exists a multi-point valid inequality; if not, algorithm continues to try on a smaller group of points until $\Delta = 1$ where it arrives at the classic single-point separation problem. For an appropriate tradeoff between the tight LBS and the computation time on the reoptimization in line 3, the above procedure is repeated until the termination condition is reached, that is either at an iteration if less than 60\% of points in LBS are cut off, or the cutting planes loop (line 2) reaches the maximum limit of $5$ iterations. To avoid generating redundant cuts, previously generated inequalities are stored in a cut pool implemented with a hashmap structure for each node. More precisely, the predecessor's cut pool is verified before the cutting plane loop.

\begin{algorithm}
\scriptsize
\KwIn{$\mathcal{L}$ the lower bound set (ordered from the left-up to the right-bottom in bi-criteria space) of the current node $\eta$.}
\KwOut{The new lower bound set $\mathcal{L}$ after applying or not valid inequalities.}

$\mathcal{I} \leftarrow \emptyset$ \;

\While{$end\_condition(\mathcal{L})$}{
$\mathcal{L} \leftarrow BOLP\_relaxation(\eta)$ \; 
$l \leftarrow 1$ \; 

\While{$l \leqslant |\mathcal{L}|$}{
\For{$\Delta = max\_step \ldots 1$}{
    $r \leftarrow l + \Delta$ -1\; 
    \If{$r < |\mathcal{L}|$}{
    
    $(\alpha, \beta) \leftarrow multi\_point\_separator(\mathcal{S}(\mathcal{L}[l]), \mathcal{S}(\mathcal{L}[r]))$ \; 
    
    \If{$(\alpha, \beta)$ \NOT (\NULL, \NULL) }{$\mathcal{I} \leftarrow \mathcal{I} \cup \{\alpha x \leqslant \beta \}$ \; 
    $l \leftarrow r$ \;
    \text{\textbf{goto} line 14} \;
    }
    }
} 

$l \leftarrow l +1$ \;
} 

} 

$update\_cuts\_pool(\eta, \mathcal{I})$ \; 
\Return $\mathcal{L}$ \; 
\caption{\textsc{Multi-point cutting plane}($\eta$, $\mathcal{L}$)} 
\label{algo:MP_cutting_plane}
\end{algorithm}

To see the efficiency of our multi-point cutting plane algorithm, cover inequalities, whose efficiency is well known for Knapsack problems classes, are applied on BOKP, BOMDMKP, MOKP-A and MOKP-B instances. The classical greedy generator for the cover inequalities is implemented and adapted to separate one or a subset of fractional points. Note that this technique can be extended to any heuristic or exact generators of other inequality classes.

\subsection{Cutting approach invoking ILP solver}\label{sec:solver_cuts}

In this section, a framework to provide efficient valid inequalities (\ref{eq:1}) from \autoref{lem:lambda_valid_cut} is proposed, with the algorithm to construct the new associated LBS.

\subsubsection{Cutting separation}

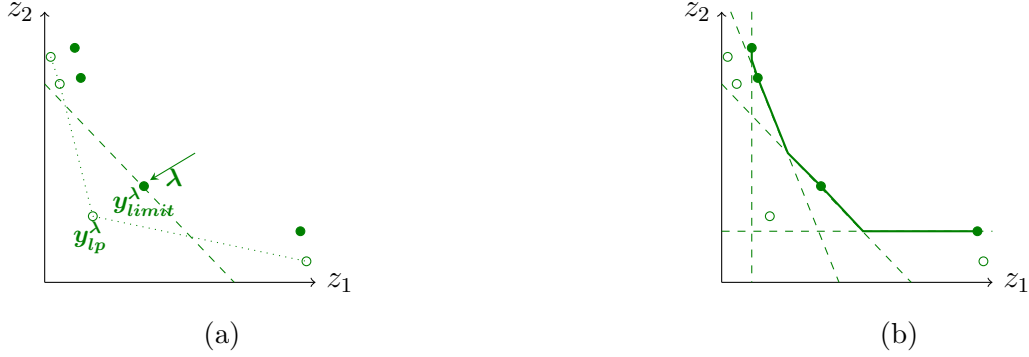
\begin{figure}
\centering
\begin{subfigure}[b]{0.4\textwidth}%
\begin{tikzpicture}[scale=.4,  domain=0:9,
bluenode/.style={shape=circle, fill=blue!70, draw=blue!70, minimum size=0.1pt, scale = 0.3},
blacknode/.style={shape=circle, fill=black!40, draw=black!40, minimum size=0.1pt, scale=0.3},
greennode/.style={shape=circle, fill=black!50!green, draw=black!50!green, minimum size=0.1pt, scale=0.3},
rednode/.style={shape=circle, fill=red!50, draw=red!50, minimum size=0.1pt, scale=0.3},
dashednodegreen/.style={shape=circle, draw=black!50!green, minimum size=0.2pt, scale=0.3},
dashednodered/.style={shape=circle, draw=red!70, minimum size=0.2pt, scale=0.3}
]

\def \xmin{0} \def \ymin{0} \def \xmax{9} \def \ymax{9}
\draw[->] (\xmin,\ymin) -- (\xmax,\ymin) node[right] {$z_1$}; 
\draw[->] (\xmin,\ymin) -- (\xmin, \ymax) node[left] {$z_2$};
\node[greennode] at (1,7.8) () {};
\node[greennode] at (1.2, 6.8) () {};
\node[greennode] at (3.3,3.2) () {};
\node[greennode] at (8.5,1.7) () {};

\node[dashednodegreen] at (0.2, 7.5) () {};
\node[dashednodegreen] at (8.7, 0.7) () {};
\node[dashednodegreen] at (0.5, 6.6) () {};
\node[dashednodegreen] at (1.6, 2.2) () {};

\draw[black!50!green, dotted, thin] (0.2, 7.5) -- (0.5, 6.6) -- (1.6, 2.2)-- (8.7, 0.7) ;

\draw [black!50!green, -stealth] (5, 4.3)  -- (3.5,3.4);
\node[black!50!green, scale=.8] at (4.3, 3.5) {$\boldsymbol{\lambda}$};

\node[black!50!green, scale=.8] at (1.5, 1.5){$\boldsymbol{y^\lambda_{lp}}$};
\node[black!50!green, scale=.8] at (3.3, 2.7){$\boldsymbol{y^\lambda_{limit}}$};

\draw[black!50!green, dashed, thin] (0, 6.6) -- (6.3, 0);
\end{tikzpicture}
\caption{ }
\label{fig:cuts_solvers_subfig1}
\end{subfigure}
\hfill
\begin{subfigure}[b]{0.4\textwidth}
\begin{tikzpicture}[scale=.4,  domain=0:9,
bluenode/.style={shape=circle, fill=blue!70, draw=blue!70, minimum size=0.1pt, scale = 0.3},
blacknode/.style={shape=circle, fill=black!40, draw=black!40, minimum size=0.1pt, scale=0.3},
greennode/.style={shape=circle, fill=black!50!green, draw=black!50!green, minimum size=0.1pt, scale=0.3},
rednode/.style={shape=circle, fill=red!50, draw=red!50, minimum size=0.1pt, scale=0.3},
dashednodegreen/.style={shape=circle, draw=black!50!green, minimum size=0.2pt, scale=0.3},
dashednodered/.style={shape=circle, draw=red!70, minimum size=0.2pt, scale=0.3}
]

\def \xmin{0} \def \ymin{0} \def \xmax{9} \def \ymax{9}
\draw[->] (\xmin,\ymin) -- (\xmax,\ymin) node[right] {$z_1$}; 
\draw[->] (\xmin,\ymin) -- (\xmin, \ymax) node[left] {$z_2$};
\node[greennode] at (1,7.8) () {};
\node[greennode] at (1.2, 6.8) () {};
\node[greennode] at (3.3,3.2) () {};
\node[greennode] at (8.5,1.7) () {};

\node[dashednodegreen] at (0.2, 7.5) () {};
\node[dashednodegreen] at (8.7, 0.7) () {};
\node[dashednodegreen] at (0.5, 6.6) () {};
\node[dashednodegreen] at (1.6, 2.2) () {};

\draw[black!50!green, dashed, thin] (1,9) -- (1, 0);
\draw[black!50!green, dashed, thin] (0, 1.7) -- (9, 1.7);
\draw[black!50!green, dashed, thin] (0.3, 9) -- (3.9, 0);
\draw[black!50!green, dashed, thin] (0, 6.6) -- (6.3, 0);

\draw[black!50!green, thick] (1, 7.8) -- (1, 7.4) -- (1.2, 6.8) -- (2.2, 4.3) -- (3.3,3.2) -- (4.7, 1.7) -- (8.5,1.7);

\end{tikzpicture}
\caption{ }
\label{fig:cuts_solvers_subfig2}
\end{subfigure}
    
\caption{The new lower bound set invoking the ILP solver for cutting approach.}    
\label{fig:cuts_solvers}
\end{figure}

Given a parameter $\lambda \in \mathbb{R}^2_{\geqq}$ and $\sum_{k=1}^{2} \lambda_k = 1$, the scalarization single-objective linear program is $\widetilde{P}(\lambda) = \{\min \lambda^T z(x) \text{ s.t. } x \in \widetilde{\mathcal{X}} \}$. The optimal value $y^\lambda_{lp}$ of $\widetilde{P}(\lambda)$ is then a point of the lower bound set illustrated in \autoref{fig:cuts_solvers_subfig1}. Instead of solving $\widetilde{P}(\lambda)$, it is possible to let ILP solver deal with the integer problem $P(\lambda) = \{\min \lambda^T z(x) \text{ s.t. } x \in \mathcal{X} \}$ but with a limit (e.g. the CPU time or the number of explored nodes) to keep a low global consuming time. If a primal solution is obtained, denote the best primal solution encountered so far $y^\lambda_{limit} = z(x^\lambda_{limit})$, and the hyperplane $\mathcal{H}^\lambda_{\geqq} = \{ y \in \mathbb{R}^2 | \lambda^\perp y \geqslant \lambda^\perp y^\lambda_{limit} \}$ is then a valid inequality cutting $y^\lambda_{lp}$ (see \autoref{fig:cuts_solvers_subfig1}). Moreover, considering a set $\Lambda$ of scalarization vectors, the intersection of every halfspace $\bigcap_{\lambda \in \Lambda} \mathcal{H}^\lambda_{\geqq} $ is convex and the non-dominated boundaries $\big( \bigcap_{\lambda\in\Lambda} \mathcal{H}^\lambda_{\geqq} \big)_N $ (bold lines in \autoref{fig:cuts_solvers_subfig2}) form a valid LBS.

As ILP solver is invoked independently for $P(\lambda), \, \forall \lambda \in \Lambda$, the best primal points $y^\lambda_{limit}$ are not the extreme points on the same cut polyhedron. Therefore, the new LBS is not the convex combination of $\{y^\lambda_{limit} \,  \forall \lambda \in \Lambda\}$, but is the intersection of halfspaces presented in \autoref{algo:blackbox_LBS_cut} and \autoref{algo:LBS_update_intersect}.

\subsubsection{Convex intersection between lower bound sets}

\autoref{algo:blackbox_LBS_cut} invokes an ILP optimizer to solve a series of scalarization problems over an inequality set $\mathcal{X}'$ in a dichotomic order, and compute their convex intersection in \autoref{algo:LBS_update_intersect}. 

At each iteration of \autoref{algo:blackbox_LBS_cut} between lines 14-22, the optimizing parameter $\lambda$ is defined as the vector perpendicular to the segment linking the last two encountered consecutive points.

\begin{figure}
\centering
\begin{subfigure}[b]{0.4\textwidth}%
\begin{tikzpicture}[scale=.4,  domain=0:9,
bluenode/.style={shape=circle, fill=blue!70, draw=blue!70, minimum size=0.1pt, scale = 0.3},
blacknode/.style={shape=circle, fill=black!40, draw=black!40, minimum size=0.1pt, scale=0.3},
greennode/.style={shape=circle, fill=black!50!green, draw=black!50!green, minimum size=0.1pt, scale=0.3},
rednode/.style={shape=circle, fill=red!50, draw=red!50, minimum size=0.1pt, scale=0.3},
dashednodegreen/.style={shape=circle, draw=black!50!green, minimum size=0.2pt, scale=0.3},
dashednodered/.style={shape=circle, draw=red!70, minimum size=0.2pt, scale=0.3}
]

\def \xmin{0} \def \ymin{0} \def \xmax{9} \def \ymax{9}
\draw[->] (\xmin,\ymin) -- (\xmax,\ymin) node[right] {$z_1$}; 
\draw[->] (\xmin,\ymin) -- (\xmin, \ymax) node[left] {$z_2$};

\fill[fill=gray!15] (1, 8.6)-- (1.2, 6.8) -- (2.2, 4.3) -- (3.3,3.2) -- (4.7, 1.7) -- (5.5,1.7)  --cycle;

\node[greennode] at (1,7.8) () {};
\node[greennode] at (1.2, 6.8) () {};
\node[greennode] at (3.3,3.2) () {};
\node[greennode] at (8.5,1.7) () {};
\draw[black!50!green, thick] (1, 9) -- (1, 7.4) -- (1.2, 6.8) -- (2.2, 4.3) -- (3.3,3.2) -- (4.7, 1.7) -- (9,1.7);

\draw[black!50!green, dotted, semithick] (1.2, 6.8) -- (3.3,3.2);
\draw [black!50!green, -stealth] (3, 5) -- (1.6, 4);
\node[black!50!green, scale=.8] at (3.2, 4.5) {$\boldsymbol{\lambda}$};

\node[greennode] at (2.8, 5.8) () {};
\node[black!50!green, scale=.8] at (3.3, 6.3){$\boldsymbol{y_{limit}^\lambda}$};

\draw[black!50!green, dashed, thin] (0.8, 9) -- (6.4, 0);
\end{tikzpicture}
\caption{ }
\label{fig:intersection_hyoerplane1}
\end{subfigure}
\hfill
\begin{subfigure}[b]{0.4\textwidth}
\begin{tikzpicture}[scale=.4,  domain=0:9,
bluenode/.style={shape=circle, fill=blue!70, draw=blue!70, minimum size=0.1pt, scale = 0.3},
blacknode/.style={shape=circle, fill=black!40, draw=black!40, minimum size=0.1pt, scale=0.3},
greennode/.style={shape=circle, fill=black!50!green, draw=black!50!green, minimum size=0.1pt, scale=0.3},
rednode/.style={shape=circle, fill=red!50, draw=red!50, minimum size=0.1pt, scale=0.3},
dashednodegreen/.style={shape=circle, draw=black!50!green, minimum size=0.2pt, scale=0.3},
dashednodered/.style={shape=circle, draw=red!70, minimum size=0.2pt, scale=0.3}
]
\def \xmin{0} \def \ymin{0} \def \xmax{9} \def \ymax{9}
\draw[->] (\xmin,\ymin) -- (\xmax,\ymin) node[right] {$z_1$}; 
\draw[->] (\xmin,\ymin) -- (\xmin, \ymax) node[left] {$z_2$};

\fill[fill=gray!15] (4, 2.4) -- (4.7, 1.7) -- (5.8 ,1.7)  --cycle;

\node[greennode] at (1,7.8) () {};
\node[greennode] at (1.2, 6.8) () {};
\node[greennode] at (3.3,3.2) () {};
\node[greennode] at (8.5,1.7) () {};
\draw[black!50!green, thick] (1, 9) -- (1, 7.4) -- (1.2, 6.8) -- (2.2, 4.3) -- (3.3,3.2) -- (4.7, 1.7) -- (9,1.7);

\draw[black!50!green, dotted, semithick] (3.3,3.2) -- (8.5,1.7) ;
\draw [black!50!green, -stealth] (6.8, 4) -- (6, 2.5);
\node[black!50!green, scale=.8] at (7, 3) {$\boldsymbol{\lambda}$};

\node[greennode] at (3.7, 2.5) () {};
\node[black!50!green, scale=.8] at (3.4, 2.1){$\boldsymbol{y_{limit}^\lambda}$};
\draw[black!50!green, dashed, thin] (0, 3.8) -- (9, 0.6);
\end{tikzpicture}
\caption{ }
\label{fig:intersection_hyoerplane2}
\end{subfigure}

\caption{The two cases of $y^\lambda_{limit}$ induced by \autoref{algo:blackbox_LBS_cut}.}
\label{fig:intersection_hyoerplane}
\end{figure}

\begin{algorithm}
\scriptsize
\KwIn{A linear system $\mathcal{X}'\subseteq \widetilde{\mathcal{X}}$. $max\_time\_limit$ and $max\_node\_limit$ parameters to be set in ILP solver.
}

\KwOut{A lower bound set $\mathcal{L}$.}

$\mathcal{L} \leftarrow \{\}$ \;
$Optimizer.time\_limit \leftarrow max\_time\_limit $ \; 
$Optimizer.node\_limit \leftarrow max\_node\_limit$ \;

$\text{Set function } P(\lambda) \text{ as } \{\min \lambda^T z(x) \text{ s.t. } x \in \mathcal{X}' \}$ \;

$\lambda_l \leftarrow [1.0, 0.0]$ \tcp*{leftmost scalar}
$y^{\lambda_l} \leftarrow Optimizer.solve(P(\lambda_l))$ \; 

\textsc{Update}($\mathcal{L}$, $y^{\lambda_l}$) \; 

$\lambda_r \leftarrow [0.0, 1.0]$ \tcp*{rightmost scalar}
$y^{\lambda_r} \leftarrow Optimizer.solve(P(\lambda_r))$ \; 
\textsc{Update}($\mathcal{L}$, $y^{\lambda_r}$) \; 

\If{$|\mathcal{L}| < 2 $}{\Return $\mathcal{L}$ \; }

$pairs \leftarrow \{(\mathcal{L}.first(), \mathcal{L}.last())\}$ 

\While{$|pairs| > 0 $}{
$(y^{\lambda_l}, y^{\lambda_r}) \leftarrow pairs.first() $ \; 
$remove(pairs.first)$ \;
$\lambda \leftarrow [y^{\lambda_l}[2]- y^{\lambda_r}[2], \, y^{\lambda_r}[1]- y^{\lambda_l}[1]]$ \; 
$y^{\lambda} \leftarrow Optimizer.solve(P(\lambda))$ \;

$inserted \leftarrow$ \nameref{algo:LBS_update_intersect} \;

\If{$inserted$}{ 
Set $(y^l, y^r)$ as the left and right points adjacent to $y^{\lambda}$ in $\mathcal{L}$ \;

$pairs \leftarrow pairs \cup \{(y^l, y^\lambda) , (y^\lambda, y^r)\}$ \;
}

} 
\Return $\mathcal{L}$ \; 
\caption{\textsc{ILP cutting separation}($\mathcal{X}', \, max\_time\_limit, \, max\_node\_limit$)} 
\label{algo:blackbox_LBS_cut}
\end{algorithm}

The ILP optimizer may yield points corresponding to two cases.
One case is when $y^\lambda_{limit}$ is strictly above the segment connected by the two previous consecutive points. The other case is when $y^\lambda_{limit}$ is under the segment of the two previous points and even below the LBS. For instance in \autoref{fig:intersection_hyoerplane1} the new output $y_{limit}^\lambda$ from the ILP optimizer is strictly above the (dotted) line segment of two consecutive points, the part of $\mathcal{L}+\mathbb{R}^2_{\geqq}$ in gray below the intersecting hyperplane is eliminated. In \autoref{fig:intersection_hyoerplane2}, the new output $y_{limit}^\lambda$ is strictly under $\mathcal{L}$, the part of $\mathcal{L}+\mathbb{R}^2_{\geqq}$ in gray below the intersecting hyperplane and the point $y_{limit}^\lambda$ are both excluded.

\begin{algorithm}
\scriptsize
\KwIn{The current lower bound set $\mathcal{L}$ and the new primal point found $y^\lambda$.}
\KwOut{$\True$ if $y^\lambda$ is inserted into $\mathcal{L}$ after convexly intersecting $\mathcal{L}$ with the hyperplane passing by $y^\lambda$, $\False$ otherwise.}

\tcp{find the intersection points between the current hyperplane and $\mathcal{L}$} 
$intersection\_pts \leftarrow intersect(\mathcal{L}, y^{\lambda})$\;

\For{$y^{\lambda'} \in \mathcal{L}$}{
\tcp{eliminate all points that are below the hyperplane passing by $y^\lambda$}
\If{$\lambda^Ty^{\lambda'} < \lambda^T y^\lambda$}{ $delete(\mathcal{L}, y^{\lambda'})$ \; }
} 

\For{$y \in intersection\_pts$}{$\mathcal{L} \leftarrow \mathcal{L} \cup y$ \tcp*{add the intersecting points in $\mathcal{L}$}}

\tcp{ignore $y^\lambda$ in case of below $\mathcal{L}$}
\For{$y^{\lambda'} \in \mathcal{L}$}{
\If{$\lambda'^T y^\lambda < \lambda'^T y^{\lambda'}$}{\Return \False \;}
}

$\mathcal{L} \leftarrow \mathcal{L} \cup y^\lambda$ \;
\Return \True \; 
\caption{\textsc{Update}($\mathcal{L}$, $y^\lambda$)} 
\label{algo:LBS_update_intersect}
\end{algorithm}

To obtain the best lower bound set induced by a point $y^\lambda$, the function  \nameref{algo:LBS_update_intersect} of \autoref{algo:LBS_update_intersect} copes with the convex intersection of the ongoing LBS $\mathcal{L}$ and the hyperplane $\mathcal{H}^\lambda$ passing by $y^\lambda$. To this end, this function eliminates all points under the current intersecting hyperplane $\mathcal{H}^\lambda$ with $\mathcal{L}$, and then adds to $\mathcal{L}$ the new intersected points between $\mathcal{L}$ and $\mathcal{H}^\lambda$. If $y^\lambda$ is outside $\mathcal{L}+\mathbb{R}^2_{\geqq}$, it is dominated by the intersection of the hyperplane $\mathcal{H}^\lambda$ and $\mathcal{L}$, and is then not considered in $\mathcal{L}$. Finally, the new LBS consists of the primal points retrieved from the ILP solver and the intersecting points as well.

\autoref{algo:blackbox_LBS_cut} repeats the above convex intersection procedure until no more potential searching direction $\lambda$ is available. The exploring direction $\lambda$ is enumerated in a dichotomic manner, starting from the two leftmost and rightmost points. For each new primal point $y^\lambda$ integrated into $\mathcal{L}$, the next direction parameters are the vectors respectively orthogonal to the line segments defined by $y^\lambda$ and its two adjacent points.

\subsection{Reducing the lower bound computation time}\label{sec:cuts_aggregation}

In the previous algorithms, the BOLP relaxation is computed exhaustively by enumerating all non-dominated extreme points on polyhedron $\widetilde{\mathcal{Y}}$ in \autoref{algo:B&B} and \autoref{algo:MP_cutting_plane}, and all potential optimizing directions are explored in \autoref{algo:blackbox_LBS_cut}. However for large problems, the cardinality of the extreme points on the non-dominated BOLP boundaries $\widetilde{\mathcal{Y}}_N$ is not polynomially bounded respected to the input instance size. To avoid that LBS construction and reinforcement overconsume computation time, a non-exact LBS calculation will be  parameterized with $\Lambda$, a small number of scalarization vectors, is suggested in the following.

\paragraph{Algorithm BOBLB\&C ($|\Lambda|$)} Since the LBS of the predecessor remains valid for the current node, the convex intersection of the local non-exact LBS with the predecessor's is still valid for the current subproblem by definition in \cite{ehrgott2007bound}. In \autoref{algo:blackbox_LBS_cut}, the size of $\Lambda$ and the scalar vectors $\lambda \in \Lambda$ may be defined by different strategies. By default, as mentioned in \autoref{algo:blackbox_LBS_cut}, parameters $\Lambda$ are determined in a dichotomic manner. $\Lambda$ can be settled equilibrately $\{\frac{1}{|\Lambda|}, \frac{2}{|\Lambda|}, \ldots , \frac{|\Lambda|-1}{|\Lambda|}\}$, which is most used in the literature. We also proposed the "chordal" strategy where $\Lambda$ are perpendicularly fixed by the $|\Lambda|$ equidistant segments connected by two non-adjacent lower bounds in $\mathcal{L}$.

\paragraph{Algorithm Cut\&Branch} Another common strategy in the ILP context is to put more cutting efforts at the root node to start with a good relaxation bound, and do not cut but only branch in the rest of the nodes for a fast enumeration. In this strategy, cutting the LBS is applied until all lower bounds are integer at the root node, and only the BOLP relaxation problem is solved in the rest of the tree. In bi-objective context, this strategy is equivalent to the two-phase method (see \cite{ulungu1995two, przybylski2010two}) without parallelism.

\section{Experimental results}\label{sec:experiments}

\begin{table}
\centering
\scriptsize
\begin{tabular}{lcccc}
\toprule

\textbf{Abbreviation} &  Extended Pareto Branching (EPB) &  Multi-Point cuts (MP) & ILP Solver Cuts (ISC) & $|\Lambda|$\\
\midrule

\textbf{B\&B} & No & No & No & No\\
\textbf{EPB B\&B} & Yes & No & No & No\\
\textbf{B\&C (MP)} & No & Yes & No& No \\
\textbf{EPB B\&C (MP)} & Yes & Yes & No & No\\


\textbf{B\&C (ISC)} & No& No & Yes& No \\
\textbf{EPB B\&C (ISC)} & Yes & No & Yes & No\\
\textbf{B\&C (ISC+MP)} & No & Yes & Yes& No \\
\textbf{EPB B\&C (ISC+MP)} & Yes & Yes & Yes& No \\

\textbf{B\&C (ISC)($|\Lambda|$)} & No& No & Yes  & Yes\\
\textbf{EPB B\&C (ISC)($|\Lambda|$)} & Yes & No & Yes  & Yes\\
\textbf{B\&C (ISC+MP)($|\Lambda|$)} & No & Yes & Yes  & Yes\\
\textbf{EPB B\&C (ISC+MP)($|\Lambda|$)} & Yes & Yes & Yes  & Yes\\
\textbf{Cut\&Branch} & No & root possible & root only & No \\

\bottomrule
\end{tabular}
\caption{The BOBLB\&B algorithm with different implemented functions.}
\label{tab:abbrev}
\end{table}

In this section, different cutting approaches are compared for our BOBLB\&C algorithm. \autoref{tab:abbrev} displays the abbreviations of our BOBLB\&B(\&C) algorithms with different strategies applied. For each version of the algorithm, we indicate if we apply 
\begin{itemize}
    \item[-] EPB : the Extended Pareto Branching
    \item[-] MP : the Multi-Point cutting planes of \autoref{algo:MP_cutting_plane}. Recall that we only propose cover inequality separators in this work, and accordingly we apply the MP algorithm only on knapsack-like instances.
    \item[-] ISC : the ILP Solver's internal Cuts. In \autoref{algo:blackbox_LBS_cut}, for each scalarization problem, the ILP solver optimizes the integer problem with the number of explored nodes limited to $0$. In our experiments, IBM ILOG \textsc{Cplex} Optimizer \cite{cplex} solver of version 22.1.0 is used to benefit the embedded cuts generators, heuristics, preprocessing and other internal functions. The best fractional primal solution found by ILP solver is extracted via callback functions. If both ISC and MP cuts are activated, the multi-point cutting planes algorithm is applied after obtaining the LBS cut by ILP solver. Moreover, we also retrieve the heuristic solutions found by ILP solver to improve the global UBS. 

    \item[-] $|\Lambda|$ : the maximum number of lower bounds computed on every node. An exhaustive LBS is calculated only at the root node. For the rest of the nodes, $|\Lambda|$ lower bounds are computed and the respective supporting planes are convexly intersected with the predecessor's LBS in \autoref{algo:LBS_update_intersect}.
\end{itemize}

\subsection{BOBLB\&C with multi-point cutting planes}\label{sec:exp_multi_cuts}
In this section, we analyze the impact of our multi-point separation technique. 

In \autoref{algo:MP_cutting_plane}, the single-point cut generator is called on each lower bound if no valid multi-point cut is found for the current LBS. To avoid redundant inequalities, a local cut pool is managed for each node so that a previously stored inequality which is still valid for the current LBS is added before calling cut generators. Whereas, note that there may still exist, due to the local nature of the cut pool, redundant cuts in terms of the whole BOBLB\&C tree.

\begin{table}
\scriptsize
\centering
\begin{tabular}{lrrrrrr}
\toprule
~ & ~ & ~ & \multicolumn{2}{c}{\textbf{B\&C (MP)}}  & \multicolumn{2}{c}{\textbf{EPB B\&C (MP)}}
\\
\cmidrule(r){4-5} \cmidrule(r){6-7} 
\textbf{Instance} & \textbf{n} & \textbf{m} & \textbf{\#SP cuts} &\textbf{\#MP cuts} & \textbf{\#SP cuts} &\textbf{\#MP cuts} \\
\midrule

\multirow{3}{*}{BOBKP} & 10-20 & 2.00 & 151.50 & 282.72 & 171.00 & 346.14 \\
~ & 30-40 & 2.00 & 68140.07 & 303451.79 & 79326.43 & 351616.43 \\ 
~ & 50 & 2.00 & 194218.57 & 656873.86 & 145485.57 & 786934.29 \\ 
\hline

\multirow{2}{*}{BOMDMKP} & 10-20 & 2.86 & 3632.66 & 7617.53 & 4493.04 & 12566.72 \\  
~ & 30-40 & 4.84 & 192397.11 & 690735.13 & 227118.67 & 1.06277e6\\ 
\hline

\multirow{3}{*}{MOKP-A} & 10-20 & 1.00 & 1275.09 & 1568.58 & 658.73& 1628.59 \\ 
~ &25-30 & 1.00 & 12798.49 & 17379.47 & 6048.89 & 14817.86\\
~ & 35-40 & 1.00 & 56065.50 & 78351.55 & 27407.97 & 57351.03 \\
\hline

\multirow{5}{*}{MOKP-B} & 10-20 & 1.00 & 3969.79 & 5076.74 & 2002.78 & 5494.58 \\
~ & 30-40 & 1.00 & 246242.68 & 321845.53 & 103766.13 & 274638.38 \\ 
~ & 50-60 & 1.00 & 651634.40 & 1.07692e6 & 376897.85 &1.01649e6 \\ 
~ & 70-80 & 1.00 & 537040.45 & 919172.85 & 384578.90 & 998296.05 \\ 
~ & 90-100 & 1.00 & 451224.25 & 771426.65 & 332495.20 & 1.00591e6 \\ 

\bottomrule
\end{tabular}
\caption{Comparison of the number of Single-Point(\#SP) cuts and Multi-Point(\#MP) cuts.}
\label{tab:BOBKP_BCMCvsEPBMC}
\end{table}

\begin{table}
\scriptsize
\centering
\scalebox{0.8}{
\begin{tabular}{lrrrrrrrrrrrrrr}
\toprule
~ & ~ & ~ & \multicolumn{3}{c}{\textbf{B\&B}} & \multicolumn{3}{c}{\textbf{B\&C(MP)}} & \multicolumn{3}{c}{\textbf{EPB B\&B}} & \multicolumn{3}{c}{\textbf{EPB B\&C(MP)}} \\

\cmidrule(r){4-6} \cmidrule(r){7-9} \cmidrule(r){10-12} \cmidrule(r){13-15}  
\textbf{Instance} & \textbf{n} & \textbf{m} & \textbf{Time(s)} & \textbf{\#TL} &\textbf{\#Nodes} & \textbf{Time(s)} & \textbf{\#TL} &\textbf{\#Nodes} & \textbf{Time(s)} & \textbf{\#TL} &\textbf{\#Nodes} & \textbf{Time(s)} & \textbf{\#TL} &\textbf{\#Nodes} \\
\midrule

\multirow{3}{*}{BOBKP} & 10-20 & 2.00 & \textbf{0.30} & 0/14 & 509.72 & 0.72 & 0/14 & 103.29  & 0.96 & 0/14 & 740.29 & 0.67  & 0/14 & 343.65 \\ 
~ & 30-40 & 2.00 &\textbf{375.42} & 1/14 & 109992.57 &419.87 & 1/14  & 20299.86 & 409.18 & 1/14 & 256052.86 & 383.16 & 1/14 & 170421.00 \\ 
~ & 50 & 2.00 & 1192.92 & 2/7 &165915.00 & 1217.61 & 2/7 & 38089.00 & 1359.82 & 2/7 & 386796.86 & \textbf{1066.23} & 1/7 & 217257.00\\ 
\hline

\multirow{2}{*}{BOMDMKP} & 10-20 & 2.86 & \textbf{12.71} & 0/21 & 15700.91 & 14.80 & 0/21 & 3048.72 & 13.58 & 0/21 & 16071.00 & 15.64 & 0/21 & 6478.90 \\
~ & 30-40 & 4.84 & \textbf{2503.30} & 32/57 & 648192.34 & 2749.22 & 38/57 & 152750.65 & 2598.29 & 34/57& 822636.19 & 2604.32 & 35/24 & 331586.38 \\
\hline

\multirow{3}{*}{MOKP-A} & 10-20 & 1.00 & \textbf{2.90} & 0/1080 & 1506.12 & 4.69 & 0/1080 & 613.26  & 4.77 & 0/1080 & 3738.68 & 5.19 & 0/1080 & 2554.76 \\
~ & 25-30 & 1.00 & \textbf{19.21} & 0/60 & 8402.90 & 39.80 & 0/60 & 3865.80 & 35.71 & 0/60& 42977.72 & 37.26 & 0/60 & 31487.07 \\
~ & 35-40 & 1.00 & \textbf{96.33} & 0/60 & 27932.67 & 205.75 & 0/60 & 13048.94 & 200.66 & 0/60& 210114.13 & 203.42 & 0/60& 164504.17 \\ 
\hline

\multirow{5}{*}{MOKP-B}& 10-20 &1.00& \textbf{6.91} & 0/60 & 6516.04 & 9.64 & 0/60 & 1549.04 & 7.59 & 0/60 & 6733.68 & 7.98 & 0/60 & 2403.27 \\
~ & 30-40 &1.00 & \textbf{354.98} & 0/40 & 181982.95 & 720.45 & 0/40 & 66776.05 & 496.96 & 0/40 & 490901.95 & 500.97 & 0/40 & 297102.05 \\ 
~ & 50-60 &1.00 & \textbf{2772.78} & 9/20 & 547773.50 & 3370.41 & 15/20 & 121161.70 & 3111.45 & 15/20 & 1.93375e6 & 3141.85 & 15/20 & 1.33261e6 \\
~ & 70-80 &1.00 & 3600.99 & 20/20 & 241904.20 & 3601.66 & 20/20 & 55104.40 & 3601.47 & 20/20 & 1.71603e6 & 3601.72 & 20/20 & 1.21872e6 \\
~ & 90-100 &1.00 & 3600.95 & 20/20 & 143654.90 & 3601.75 & 20/20 & 33215.80 & 3600.93 & 20/20 & 1.53496e6 & 3601.78& 20/20 & 950149.80 \\

\bottomrule

\end{tabular}
}
\caption{Efficiency of applying Multi-point (MP) cutting planes.}
\label{tab:BOBKP_BBvsBCMC}
\end{table}

\autoref{tab:BOBKP_BCMCvsEPBMC} displays the average number of single-point cuts and multi-point cuts respectively found during the BOBLB\&C algorithm with and without EPB. The multi-point cutting plane algorithm is indeed efficient since, in \autoref{tab:BOBKP_BCMCvsEPBMC}, more multi-point valid cuts are produced than single-point cuts for all knapsack-like instances. 

\autoref{tab:BOBKP_BBvsBCMC} compares the total time and the number of explored nodes during the BOBLB\&B tree with and without \autoref{algo:MP_cutting_plane}, where the best execution times are in bold. Thanks to multi-point cuts, around 80\% of nodes are reduced in (EPB) BOBLB\&B tree and for BOBKP instances, up to 95\% of the nodes are cut down. However, the MP cutting plane algorithm improves the computation time only for a few BOBKP instances, due to the highly time-consuming of reoptimizing the BOLP problem. Without EPB, the MP cutting plane algorithm retards almost two times the BOBLB\&B tree process despite the 80\% fewer nodes. With EPB, using MP cuts the BOBLB\&C algorithm's time performance is close to the EPB BOBLB\&B algorithm.

Despite of this time consuming raising, this multi-point cut process appears to be really efficient. Consequently, it can happen that for B\&B algorithms dedicated to particular problems, this technique can produce more successful results if, for instance, more tighter cuts generators are implemented. Moreover, in order to benefit from this better lower bound set, some dedicated heuristics or metaheuristics should be helpful to produce a better upper bound set and then to discard redundant searching spaces, especially for EPB strategy.


In \autoref{sec:Lambda_Cutbranch}, we work on some other insights to reduce the time spent on relaxation for our generic B\&C methods.


\subsection{BOBLB\&C with ILP solver's internal cuts  }

\begin{table}
\centering
\scriptsize
\scalebox{0.9}{
\begin{tabular}{lrrrrrrrrrrr}
\toprule
~ & ~ & ~ & \multicolumn{3}{c}{\textbf{B\&C (MP)}} & \multicolumn{3}{c}{\textbf{B\&C (ISC)}} & \multicolumn{3}{c}{\textbf{B\&C (ISC+MP)}} \\

\cmidrule(r){4-6} \cmidrule(r){7-9} \cmidrule(r){10-12}
\textbf{Instance} & \textbf{n} & \textbf{m}  & \textbf{Time(s)} & \textbf{\#TL} &\textbf{\#Nodes} & \textbf{Time(s)} & \textbf{\#TL} &\textbf{\#Nodes} & \textbf{Time(s)} & \textbf{\#TL} &\textbf{\#Nodes} \\
\midrule

\multirow{3}{*}{BOBKP} & 10-20 & 2.00 & 0.72 & 0/14 & 103.29 & 0.52  & 0/14 & 77.43 & 0.70 & 0/14 & 76.29 \\ 
~ & 30-40 & 2.00 & 419.87 & 1/14 & 20299.86 & 57.79 & 0/14 & 3474.43 & 73.74  & 0/14 & 3148.29 \\
~ & 50 & 2.00 & 1217.61 & 2/7 & 38089.00 &202.95 & 0/7& 7237.86 & 280.93 & 0/7  & 6651.29  \\ 
\hline

\multirow{2}{*}{BOMDMKP} &  10-20 & 2.86 & 14.80 & 0/21 & 3048.72 & 6.96 & 0/21 & 236.14 & 7.71 & 0/21 & 236.14 \\
~  & 30-40 & 4.84 & 2749.22 & 38/57 & 152750.65 & \textbf{1225.50} & 8/57 & 11544.65 & 1260.86 & 9/57 & 10912.68 \\
\hline

\multirow{3}{*}{MOKP-A} & 10-20 & 1.00 & 4.69 & 0/1080 & 613.26 & 5.27 & 0/1080 & 392.28 & 5.95 & 0/1080 & 386.99 \\
 ~ & 25-30 & 1.00 & 39.80 & 0/60 & 3865.80 & 28.79 & 0/60 & 2390.57 & 32.14 & 0/60 & 2310.37 \\
~& 35-40 & 1.00 & 205.75 & 0/60 & 13048.94 & 130.33 & 0/60 & 7595.87 & 151.58 & 0/60 & 7208.40 \\
\hline

\multirow{5}{*}{MOKP-B} & 10-20& 1.00 & 9.64 & 0/60 & 1549.04 & 3.77 & 0/60 & 175.57 & 4.46& 0/60 & 170.10 \\
~ & 30-40 & 1.00 & 720.45 & 0/40 & 66776.05 & 27.44 & 0/40 & 2113.85 & 33.65 & 0/40 & 2008.40 \\
~ & 50-60 &1.00 & 3370.41 & 15/20 & 121161.70 & 186.52& 0/20 & 9412.50 & 379.66& 0/20 & 8628.80 \\
~ &70-80 &1.00 & 3601.66 & 20/20 & 55104.40 & 1400.29 & 3/20 & 24852.90 &1443.19 & 3/20 & 19064.00 \\
~ & 90-100 &1.00& 3601.75 & 20/20 & 33215.80 &2666.82 & 7/20 & 33217.50 &3603.83 & 7/20 &25968.60 \\

\bottomrule

\multicolumn{12}{c}{\vspace{\baselineskip}}\\

\toprule
~ & ~ & ~ & \multicolumn{3}{c}{\textbf{EPB B\&C (MP)}} & \multicolumn{3}{c}{\textbf{EPB B\&C (ISC)}} & \multicolumn{3}{c}{\textbf{EPB B\&C (ISC+MP)}} \\

\cmidrule(r){4-6} \cmidrule(r){7-9} \cmidrule(r){10-12}
\textbf{Instance} & \textbf{n} & \textbf{m}  & \textbf{Time(s)} & \textbf{\#TL} &\textbf{\#Nodes} & \textbf{Time(s)} & \textbf{\#TL} &\textbf{\#Nodes} & \textbf{Time(s)} & \textbf{\#TL} &\textbf{\#Nodes} \\
\midrule

\multirow{3}{*}{BOBKP} & 10-20 & 2.00  & 0.67 & 0/14  & 343.65 & 0.36 & 0/14 & 58.29 & \textbf{0.29}  & 0/14 & 57.86\\ 
~ & 30-40 & 2.00  & 383.17  & 1/14 & 170421.00 & \textbf{10.19} & 0/14 & 1037.86 & 10.99 & 0/14 & 1045.15 \\
~ & 50 & 2.00 & 1066.20 & 1/7 & 217257.00 & \textbf{24.71} & 0/7 & 2689.14 & 25.92 & 0/7  & 2689.14 \\ 
\hline

\multirow{2}{*}{BOMDMKP} & 10-20 & 2.86 & 15.64 & 0/21 & 6478.90 & \textbf{6.89} & 0/21 & 228.67 & 7.14 & 0/21 & 228.53\\
~  & 30-40 & 4.84 & 2604.32 & 35/57 & 331586.38 & 1588.15 & 11/57 & 54725.67 &1594.15 & 11/57 & 54435.09\\
\hline

\multirow{3}{*}{MOKP-A} & 10-20 & 1.00 & 5.19 & 0/1080 & 2554.76 & \textbf{4.03} & 0/1080 & 134.25 & 4.27 & 0/1080 & 132.95\\ 
 ~ & 25-30 & 1.00 & 37.26 & 0/60 & 31487.07 & \textbf{10.28} & 0/60 & 1033.52 & 10.70 & 0/60 & 1030.53\\
~ & 35-40 & 1.00 & 203.42 & 0/60 & 164504.17 & \textbf{33.01} & 0/60 & 3732.67 & 33.93 & 0/60 & 3730.89 \\
\hline

\multirow{5}{*}{MOKP-B} & 10-20& 1.00& 7.98&  0/60  & 2403.27 & \textbf{3.62}&  0/60  & 87.32& 3.89 &  0/60 &85.79 \\
~& 30-40 & 1.00 & 500.97& 0/40 & 297102.05 & \textbf{11.17}& 0/40 &1000.13 & 11.99& 0/40 & 1011.15 \\
~ & 50-60& 1.00 & 3141.85 & 15/20 & 1.332615e6 & \textbf{54.69}& 0/20 & 5250.40 &  56.84 & 0/20 & 5245.40 \\
~& 70-80 &1.00&  3601.72 & 20/20 & 1.21872e6 & \textbf{382.46} & 0/20 & 31224.50 &  391.81 & 0/20 & 31235.65 \\
~ & 90-100 & 1.00 & 3601.78 & 20/20 & 950134.80 & \textbf{1109.48} & 0/20 & 79893.90 &  1158.20 & 0/20 & 81802.40 \\

\bottomrule
\end{tabular}
}
\caption{The BOBLB\&C performance with different cutting approaches.}
\label{tab:BOBKP_MCvsCPLEX}
\end{table}

\begin{table}
\centering
\scriptsize
\begin{tabular}{lrrrrrrrrrrr}
\toprule
~ & ~ &~  & \multicolumn{3}{c}{\textbf{B\&B}} & \multicolumn{3}{c}{\textbf{B\&C (ISC)}} & \multicolumn{3}{c}{\textbf{EPB B\&C (ISC)}} \\

\cmidrule(r){4-6} \cmidrule(r){7-9} \cmidrule(r){10-12}
\textbf{Instance} & \textbf{n} & \textbf{m} & \textbf{Time(s)} & \textbf{\#TL} &\textbf{\#Nodes} & \textbf{Time(s)} & \#TL &\textbf{\#Nodes} & \textbf{Time(s)} & \textbf{\#TL} &\textbf{\#Nodes} \\
\midrule

\multirow{3}{*}{BOBSCP} & 20-40 & 6.00 & \textbf{2.10} & 0/24 & 374.42 & 4.21 & 0/24 & 1431.83 & 3.53 & 0/24 & 170.50\\ 
~ & 60-80 & 13.44  & 16.26 & 0/27 & 3144.93 & 40.71 & 0/27 & 3468.18 & \textbf{14.16} & 0/27 & 767.11\\
~ & 100 & 22.00 & \textbf{114.47} & 0/15 & 9436.73 & 138.85 & 0/15 &6878.73 & 138.07 & 0/15 & 3697.93\\
\hline

\multirow{3}{*}{BOSPA} & 50-500 & 21.17 & \textbf{15.50} & 0/6& 3114.33 & 65.10& 0/6 & 4361.00 & 18.26 & 0/6 & 933.67  \\
~ & 550-1500 & 21.94& 249.19 & 0/16 & 16997.62 & 568.94 & 1/16 & 14583.00 & \textbf{175.17} & 0/16 & 3988.75 \\
~ & 1600-3000 & 25.50&  1505.24 & 1/6 & 28945.67 & 2107.96 & 1/6 & 28259.67 & \textbf{975.70} & 0/6 & 6739.17\\
\hline

\multirow{2}{*}{MOAP} & 100-225 & 25.00  & \textbf{60.94} & 0/20 & 12436.50 & 305.13 & 0/20 & 18476.10 & 72.51 & 0/20 & 2677.00 \\ 
~ & 400-625 & 45.00 & 2262.61 & 9/20 & 104902.20 & 3532.87 & 17/20 & 39593.20 & \textbf{1868.96} & 5/20 & 22688.75 \\
\hline

\multirow{4}{*}{MOUFLP} & 30-42 & 41.08 & \textbf{4.11} & 0/130& 2189.31 & 15.21 & 0/130 & 3760.60 & 9.51 &0/130 & 654.52 \\
~ & 56-72& 57.23 & \textbf{11.48} & 0/130 & 5169.66 & 52.47 & 0/130 & 8858.64 & 54.26 & 0/130 & 2047.49 \\
~ & 110-156 & 133.00 & \textbf{9.61} & 0/240 & 2163.35 & 42.44 & 0/240 & 2602.98 & 17.69& 0/240 & 672.44\\
~ & 210& 210.00 &  \textbf{50.17}& 0/120 & 8181.37& 232.08 & 0/120 &9976.65 &101.60& 0/120 & 2268.18 \\

\bottomrule
\end{tabular}
\caption{The BOBLB\&C performance with ILP solver oracle cutting approach.}
\label{tab:BO_BBvsCPLEX}
\end{table}

\autoref{tab:BOBKP_MCvsCPLEX} respectively displays the average computation time, the number of instances unsolved within time limit (TL) and the explored tree size of all versions of BOBLB\&C algorithms implemented in this paper. This table focuses on knapsack-like instances to compare those methods including the multi-point cutting planes dedicated to these instances. 

Overall, the ILP solver oracle cutting (ISC) approaches significantly improve both the computation time and the number of pruned nodes, compared to the multi-point cutting algorithm. Simultaneously the ISC and MP cuts reduce the most of the B\&C tree size. 

Whereas using only MP cutting planes, as done in the previous section, the EPB strategy was not helpful; the EPB strategy with ISC cuts immensely improves the BOBLB\&C performance both in time and on the tree size and therefore outperforms the other methods. The ISC strategy not only efficiently reinforces the LBS but also improves the UBS with internal heuristics. 

Finally, the most sophisticated version with EPB, ISC and MP strategies leads to the least number of nodes but slightly more time consuming, in spite of the good quality of the LBS where the lower bounds are close to integer points.

\autoref{tab:BO_BBvsCPLEX} presents the results obtained with EPB and ISC cuts for the other instances. Our method significantly prunes nodes by reinforcing the bound sets. The number of nodes is always prominently reduced, but sometimes the total time is not accelerated. This is particularly the case for the MOUFLP instances for which the number of nodes is not sufficiently reduced compared to other instances. More tight inequalities for special problems should be usefully integrated in a dedicated approach for particular problems, like MOUFLP instances.

\subsection{Reduction of the lower bound set computation time}\label{sec:Lambda_Cutbranch}

\begin{figure}[!ht]
\centering
\scalebox{0.45}{\includegraphics{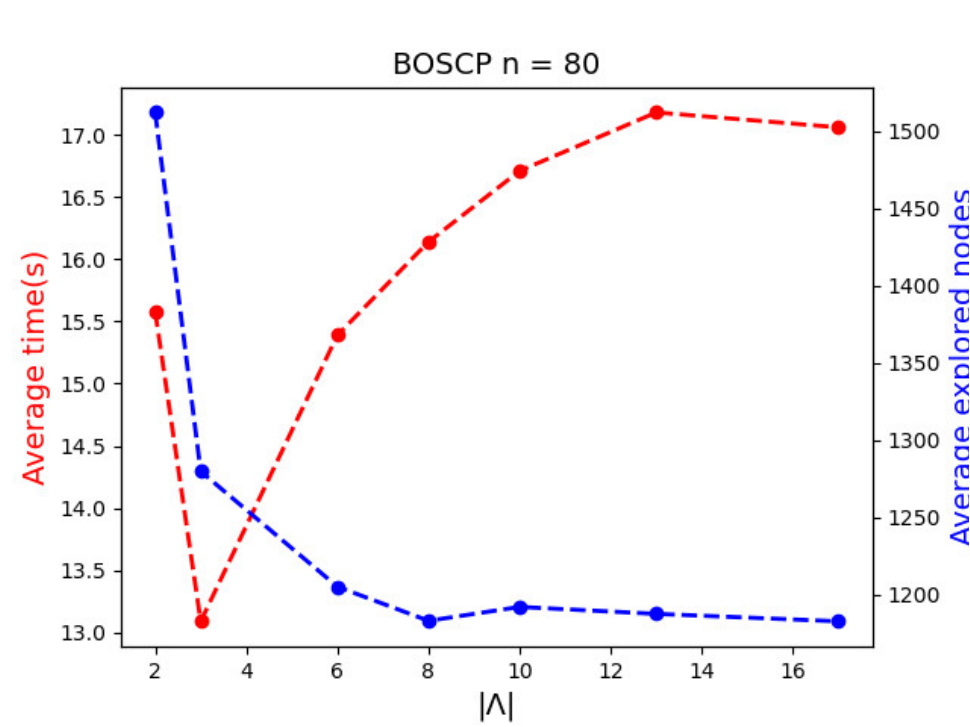}}
\hfill
\scalebox{0.45}{ \includegraphics{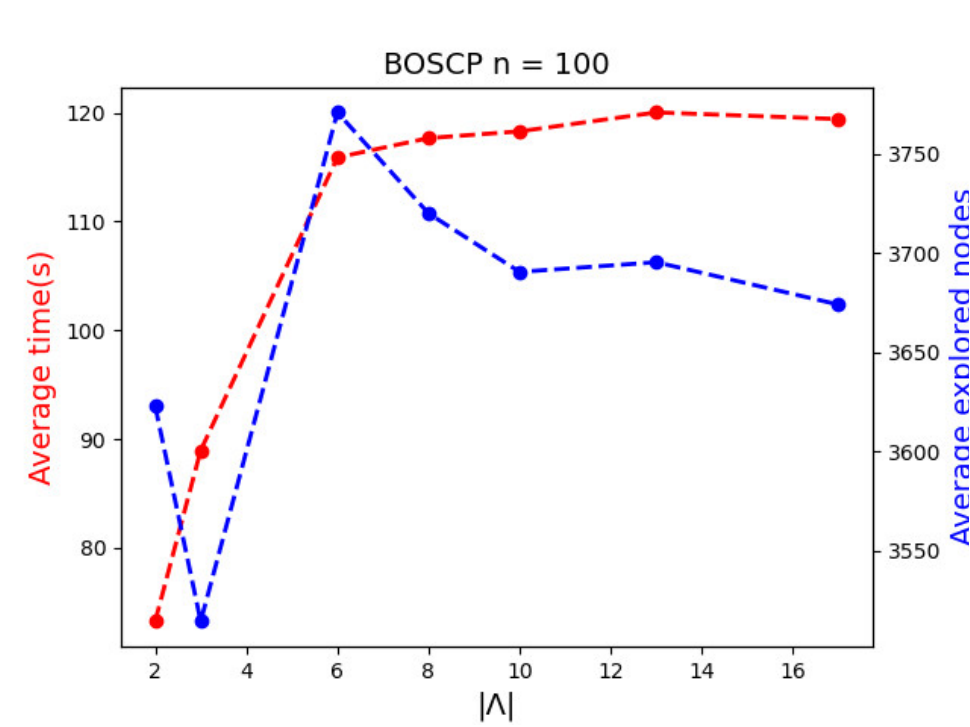}}
\caption{The BOBLB\&C with EPB and ISC performance evolution according to the number of $\Lambda$ parameters limited on each node for BOSCP instances .}
\label{fig:BOSCP_EPB_BOBC_lambda}
\end{figure}

Previous numerical results show that (re)optimizing the LBS is highly time-consuming and occupies almost all of the total computation time. Rather than exhaustively computing all potential non-dominated extreme lower bounds, \autoref{sec:cuts_aggregation} proposes to calculate a partial LBS to intersect with the predecessor's LBS. To study the trade-off between the exactness of LBS and algorithm performance, \autoref{fig:BOSCP_EPB_BOBC_lambda} illustrates our computational test to precise 
the number $|\Lambda|$ of single-objective scalarization
problems solved at each node. Each graphic plots the computation time and BOBLB\&C tree size behaviors according to $|\Lambda|$. The total consuming time drops dramatically with small $|\Lambda|$ size, while the BOBLB\&C algorithm completion time sharply increases with large $|\Lambda|$. We aim to achieve the tradeoff between the time spent on LBS calculation and the number of nodes pruned. It is straightforward that the more LBS is completed, the fewer nodes are generated by BOBLB\&C algorithm and accordingly the more time is spent. In conclusion, the appropriate limit $|\Lambda|$ is around 2 and 3.

\begin{table}[!ht]
\centering
\scriptsize
\scalebox{0.8}{
\begin{tabular}{lrrrrrrrrrrrr}
\toprule
~& ~& ~ & \textbf{\tiny{$\epsilon$-constraint}}  & \multicolumn{3}{c}{\textbf{B\&B}} & \multicolumn{3}{c}{\textbf{EPB B\&C (ISC) ($|\Lambda|$)}} &  \multicolumn{3}{c}{\textbf{Cut\&Branch}} 
\\
\cmidrule(r){4-4} \cmidrule(r){5-7} \cmidrule(r){8-10} \cmidrule(r){11-13} 
\textbf{Instance} & \textbf{n} & \textbf{m}  &\textbf{Time(s)}  & \textbf{Time(s)} & \textbf{\#TL} &\textbf{\#Nodes} & \textbf{Time(s)} & \textbf{\#TL} &\textbf{\#Nodes}& \textbf{Time(s)} & \textbf{\#TL} &\textbf{\#Nodes} \\
\midrule

\multirow{3}{*}{BOBKP} & 10-20 & 2.00  & 0.99 & 0.30 & 0/7 & 509.72 & \underline{\textbf{0.20}} & 0/14 & 58.29 & 0.71 & 0/14 & 8.43 \\
~ & 30-40 & 2.00 & \textbf{5.37} &375.42 & 0/7 & 109992.57 & \underline{7.31} & 0/14 & 1037.86 & 141.84 & 0/14 & 27.72 \\ 
~ & 50 & 2.00 & \textbf{10.74} & 1192.92 & 2/7 & 165915.00 & \underline{17.97} & 0/7 & 2689.14& 568.97& 1/7&40.29 \\
\hline

\multirow{2}{*}{BOMDMKP}  & 10-20 & 2.86 & 4.88 & 12.71 & 0/21 & 15700.91  & \underline{\textbf{2.69}} & 0/21 & 327.57 & 5.27 & 0/21 & 2934.34 \\ 
~ & 30-40 & 4.84 & \textbf{13.74} & 2503.30 & 32/57 & 648192.34 & \underline{594.98} & 2/57 & 27300.40 & 1414.46 & 10/57 & 326980.12 \\
\hline

\multirow{3}{*}{MOKP-A} & 10-20& 1.00 &4.18& 2.90 & 0/1080 & 1506.12 & \underline{\textbf{0.69}} & 0/1080 & 134.25 & 5.27 & 0/1080 & 392.28 \\
~& 25-30& 1.00& 7.27&  19.21 & 0/60 & 8402.90 & \underline{\textbf{6.64}} & 0/60 & 1033.52 & 28.79 & 0/60 & 2390.57  \\
~& 35-40& 1.00& \textbf{12.63} & 96.33 & 0/60  & 27932.67 & \underline{29.23} & 0/60 & 3732.67 & 130.34 & 0/60 & 7595.87  \\
\hline

\multirow{5}{*}{MOKP-B} & 10-20 & 1.00& 3.84& 6.91& 0/60 & 6516.04  & \underline{\textbf{0.43}} & 0/60 & 87.32 & 3.48 & 0/60 & 917.77  \\
~ & 30-40& 1.00 &\textbf{5.91} &354.98& 0/40 & 181982.95 & \underline{7.64} & 0/40 & 1000.13 & 59.63 & 0/40 & 27929.50  \\
~ &50-60& 1.00& \textbf{10.01}& 2772.78& 9/20 & 547773.50  & \underline{50.93} & 0/20 & 5250.40 & 1016.39 & 2/20 & 210694.00 \\
~ & 70-80 &1.00& \textbf{16.93} &  3600.98 & 20/20 & 241904.20 & \underline{380.77} & 0/20 & 31224.35 & 2971.46 & 11/20 & 319763.30 \\
~ & 90-100& 1.00& \textbf{24.21} &  3600.95 & 20/20 & 143654.90 & \underline{1107.85} & 0/20 & 79892.90 & 3598.72 & 19/20 & 184308.10 \\
\hline

\multirow{3}{*}{BOSCP} & 20-40 & 6.00 & 4.01 & 2.10 & 0/24 & 374.42 & \underline{\textbf{0.72}} & 0/24 & 202.29 & 2.61 & 0/24 & 356.50  \\
~ & 60-80 & 13.44 & \textbf{6.92} & 16.26 & 0/27 & 3144.93 & \underline{7.91} & 0/27 & 917.59 & 14.75 & 0/27 & 2837.23 \\
~ & 100 & 22.00 & \textbf{9.74} &114.47 & 0/15 & 9436.73 & \underline{73.25}& 0/15 & 3623.07& 82.75& 0/15 & 7382.87\\
\hline

\multirow{3}{*}{BOSPA} & 50-500& 21.94 & \textbf{4.91} &15.50 & 0/6 & 3114.33 & \underline{14.45} & 0/6 & 930.33 & 15.31 & 0/6 & 2871.67  \\
~ & 550-1500& 22.70 & \textbf{6.64} & 249.19 & 0/16 & 16997.62 & \underline{160.30} & 0/16 & 3757.08 & 211.17 & 0/16 & 15374.87  \\
~ & 1600-3000 & 25.50  &\textbf{8.41} &  1505.24 & 1/6 & 28945.67 & \underline{944.77} & 0/6 & 6744.00 & 1292.68 & 1/6 & 22169.33 \\
\hline

\multirow{2}{*}{MOAP} & 100-225 & 25.00 & \textbf{6.66} &  \underline{60.94} & 0/20 &12436.50 & 68.14 & 0/20 & 2677.00 & 66.05 & 0/20 & 12436.50 \\
~ & 400-625 & 45.00 & \textbf{11.72} &2262.61 & 9/20 & 104902.20 & \underline{1874.86} & 1/20 & 22380.60 & 2334.42 & 10/20 & 91786.80 \\
\hline

\multirow{4}{*}{MOUFLP} & 30-42 & 41.08& 6.63 & \underline{\textbf{4.11}} & 0/130 & 2189.31  & 6.39 & 0/130 & 654.52 & 5.04 & 0/130 & 2189.14  \\
~ & 56-72 & 57.23 & \textbf{9.68} & \underline{11.48} & 0/130 & 5169.66 & 76.90 & 0/130 & 2128.97 & 12.98 & 0/130 & 5170.91 \\
~ & 110-156& 133.00& \textbf{4.52} & \underline{9.61} & 0/240 & 2163.35 & 14.93 & 0/240 & 672.44 & 11.34 & 0/240 & 2161.94 \\
~ & 210& 210.00& \textbf{7.03} & \underline{50.17}& 0/120 & 8181.37 & 98.27& 0/120 & 2268.18  & 56.27 & 0/120&8175.92 \\
\bottomrule

\end{tabular}
}
\caption{A performance comparison between the $\epsilon$-constraint and BOBLB\&B\&C algorithms.}
\label{tab:BOBKP_EPBBCvsEpsilonCtr}
\end{table}

In \autoref{tab:BOBKP_EPBBCvsEpsilonCtr}, the EPB B\&C (ISC)($\Lambda$) column displays the EPB B\&C (ISC) algorithm with the parameter $|\Lambda|$ discussed above. In addition the last column presents the Cut\&Branch method which starts with the dichotomic method using an ILP solver at root node. 

\autoref{tab:BOBKP_EPBBCvsEpsilonCtr} compares our most sophisticated versions of BOBLB\&C algorithms with the pure BOBLB\&B and $\epsilon$-constraint algorithms. The best computation times are in bold and the fastest approach among BOBLB\&B\&C algorithms is underlined.

The Cut\&Branch strategy improves the B\&B methods but not as much as our BOBLB\&C algorithm. Note that this is the case for large MOUFLP instances when the BOBLB\&C algorithm is less competitive.

The BOBLB\&C algorithm outperforms the other methods, for knapsack-like instances and especially for those of reasonable size. For large scale instances, our algorithm significantly improves the simple BOBLB\&B approach with a factor of 100 of the computing time. For BOSCP and BOSPA instances, the BOBLB\&C algorithm reduces remarkably the explored nodes but saves less computation time. However for MOUFLP instances, the ILP solver's internal cuts still do not improve the BOBLB\&B algorithm. In future work, more tight specific valid inequalities could be integrated in next studies.


In conclusion, the $\epsilon$-constraint method globally outperforms for the large instances. However our BOBLB\&C algorithms immensely improve the B\&B algorithm for each case of instances. The complete experiments tables for all instances can be found at \url{https://github.com/Yue0925/vOptGeneric.jl/tree/master/tables}.




\section{Conclusion}\label{sec:conclusion}

In this paper, we devise two generic cutting approaches integrated into bi-objective branch\&cut algorithms, which both show the prominent reinforcement on the lower bound set compared to pure Branch\&Bound methods. Our multi-point valid inequalities are shown to dominate the classical single-point cuts. To strengthen the relaxation bound set, exploiting advanced techniques inside ILP solvers experimentally outperforms as well. 

The two generic approaches are easy to extend to many-objective context and for integer linear problems. Note that for many-objective ($p \geqslant 3$) combinatorial problems, the $\epsilon$-constraint algorithm starts to lose its advantage in criteria space enumeration, while the number of objective functions does not influence the difficulty of branch\&bound\&cut algorithms. One of the future works is to extend our proposed techniques in order to obtain a state-of-the-art method for $p$ greater than 2.

Our BOBLB\&C algorithms performance can still be improved by well-designed heuristics to enhance the upper bound set and the Pareto branching. Regarding reducing global time, BOBLB\&C algorithms could benefit from the parallelization that divides the criterion space into independent subproblems, such as the two-phase \cite{przybylski2010two} and slicing \cite{stidsen2018hybrid} methods.

Applying our multi-point separation idea for non-compact formulations -for instance for the traveling salesman problem- is worth studying to see the competence with other criterion space search approaches. A further application of our framework on integer nonlinear problems can also be interesting to study.

{\small\bf Acknowledgments:} We would like to express our sincere gratitude to Xavier Gandibleux and Anthony Przybylski for their contribution that initiates this research.

\bibliographystyle{plain}
\bibliography{refs}

\begin{thebibliography}{10}

\bibitem{adelgren2022branch}
Nathan Adelgren and Akshay Gupte.
\newblock Branch-and-bound for biobjective mixed-integer linear programming.
\newblock {\em INFORMS Journal on Computing}, 34(2):909--933, 2022.

\bibitem{aneja1979bicriteria}
Yash~P Aneja and Kunhiraman~PK Nair.
\newblock Bicriteria transportation problem.
\newblock {\em Management Science}, 25(1):73--78, 1979.

\bibitem{beasley1990or}
John~E Beasley.
\newblock Or-library: distributing test problems by electronic mail.
\newblock {\em Journal of the operational research society}, 41(11):1069--1072,
  1990.

\bibitem{ben2006constraint}
Walid Ben-Ameur and Jos{\'e} Neto.
\newblock A constraint generation algorithm for large scale linear programs
  using multiple-points separation.
\newblock {\em Mathematical programming}, 107(3):517--537, 2006.

\bibitem{boland2015criterion}
Natashia Boland, Hadi Charkhgard, and Martin Savelsbergh.
\newblock A criterion space search algorithm for biobjective integer
  programming: The balanced box method.
\newblock {\em INFORMS Journal on Computing}, 27(4):735--754, 2015.

\bibitem{10.1007/978-3-642-87563-2_5}
V.~Joseph Bowman.
\newblock On the relationship of the tchebycheff norm and the efficient
  frontier of multiple-criteria objectives.
\newblock In Herv{\'e} Thiriez and Stanley Zionts, editors, {\em Multiple
  Criteria Decision Making}, pages 76--86, 1976.

\bibitem{bokler2024outer}
Fritz Bökler, Sophie Parragh, Markus Sinnl, and Fabien Tricoire.
\newblock An outer approximation algorithm for generating the
  edgeworth–pareto hull of multi-objective mixed-integer linear programming
  problems.
\newblock {\em Mathematical Methods of Operations Research}, 100(1):1--28, 01
  2024.

\bibitem{cappanera2005local}
Paola Cappanera and Marco Trubian.
\newblock A local-search-based heuristic for the demand-constrained
  multidimensional knapsack problem.
\newblock {\em INFORMS Journal on Computing}, 17(1):82--98, 2005.

\bibitem{cerqueus2015bi}
Audrey Cerqueus.
\newblock {\em Bi-objective branch-and-cut algorithms applied to the binary
  knapsack problem}.
\newblock PhD thesis, universit{\'e} de Nantes, 2015.

\bibitem{chankong1983}
Vira Chankong and Yacov~Y Haimes.
\newblock Multiobjective decision making theory and methodology.
\newblock {\em Elsevier Science Publishing, New York}, 1983.

\bibitem{ehrgott2006discussion}
Matthias Ehrgott.
\newblock A discussion of scalarization techniques for multiple objective
  integer programming.
\newblock {\em Annals of Operations Research}, 147(1):343--360, 2006.

\bibitem{ehrgott2007bound}
Matthias Ehrgott and Xavier Gandibleux.
\newblock Bound sets for biobjective combinatorial optimization problems.
\newblock {\em Computers \& Operations Research}, 34(9):2674--2694, 2007.

\bibitem{florios2010solving}
Kostas Florios, George Mavrotas, and Danae Diakoulaki.
\newblock Solving multiobjective, multiconstraint knapsack problems using
  mathematical programming and evolutionary algorithms.
\newblock {\em European Journal of Operational Research}, 203(1):14--21, 2010.

\bibitem{forget2022warm}
Nicolas Forget, Sune~Lauth Gadegaard, and Lars~Relund Nielsen.
\newblock Warm-starting lower bound set computations for branch-and-bound
  algorithms for multi objective integer linear programs.
\newblock {\em European Journal of Operational Research}, 302(3):909--924,
  2022.

\bibitem{gadegaard2019bi}
Sune~Lauth Gadegaard, Lars~Relund Nielsen, and Matthias Ehrgott.
\newblock Bi-objective branch-and-cut algorithms based on lp relaxation and
  bound sets.
\newblock {\em INFORMS Journal on Computing}, 31(4):790--804, 2019.

\bibitem{gadegaard2020branch}
Sune~Lauth Gadegaard, Lars Relund, Nicolas~Joseph Forget, Kathrin Klamroth, and
  Anthony Przybylski.
\newblock Branch-and-bound and objective branching with three or more
  objectives.
\newblock {\em Computers \& Operations Research}, 148:106012, 2022.

\bibitem{gandibleux2021primal}
Xavier Gandibleux, Guillaume Gasnier, and Sa{\"\i}d Hanafi.
\newblock A primal heuristic to compute an upper bound set for multi-objective
  0-1 linear optimisation problems.
\newblock In {\em Proc. of the 1st Multi-Objective Decision Making Workshop
  (MODeM 2021)}, 2021.

\bibitem{vOptLib}
Xavier Gandibleux, Gauthier Soleilhac, and Anthony Przybylski.
\newblock voptlib: Library of numerical instances for multiobjective linear
  optimization problems, 2021.
\newblock \url{https://github.com/vOptSolver/vOptLib}.

\bibitem{vOptSolver}
Xavier Gandibleux, Gauthier Soleilhac, and Anthony Przybylski.
\newblock voptsolver: an ecosystem for multi-objective linear optimization.
\newblock JuliaCon 2021, July 28-30 2021.
\newblock \url{http://github.com/vOptSolver}.

\bibitem{cplex}
Inc IBM ILOG CPLEX~Optimization.
\newblock Cplex optimization studio reference manuals, 2021.
\newblock \url{https://www.ibm.com/analytics/cplex-optimizer}.

\bibitem{jozefowiez2012generic}
Nicolas Jozefowiez, Gilbert Laporte, and Fr{\'e}d{\'e}ric Semet.
\newblock A generic branch-and-cut algorithm for multiobjective optimization
  problems: Application to the multilabel traveling salesman problem.
\newblock {\em INFORMS Journal on Computing}, 24(4):554--564, 2012.

\bibitem{julia}
Inc JuliaHub.
\newblock The julia programming language, 2015.
\newblock \url{https://julialang.org}.

\bibitem{kirlik2014new}
Gokhan Kirlik and Serpil Say{\i}n.
\newblock A new algorithm for generating all nondominated solutions of
  multiobjective discrete optimization problems.
\newblock {\em European Journal of Operational Research}, 232(3):479--488,
  2014.

\bibitem{kiziltan1983algorithm}
G{\"u}lseren Kiziltan and Erkut Yucao{\u{g}}lu.
\newblock An algorithm for multiobjective zero-one linear programming.
\newblock {\em Management Science}, 29(12):1444--1453, 1983.

\bibitem{mavrotas2009effective}
George Mavrotas.
\newblock Effective implementation of the $\varepsilon$-constraint method in
  multi-objective mathematical programming problems.
\newblock {\em Applied mathematics and computation}, 213(2):455--465, 2009.

\bibitem{przybylski2017multi}
Anthony Przybylski and Xavier Gandibleux.
\newblock Multi-objective branch and bound.
\newblock {\em European Journal of Operational Research}, 260(3):856--872,
  2017.

\bibitem{przybylski2010two}
Anthony Przybylski, Xavier Gandibleux, and Matthias Ehrgott.
\newblock A two phase method for multi-objective integer programming and its
  application to the assignment problem with three objectives.
\newblock {\em Discrete Optimization}, 7(3):149--165, 2010.

\bibitem{ralphs2006improved}
Ted~K Ralphs, Matthew~J Saltzman, and Margaret~M Wiecek.
\newblock An improved algorithm for solving biobjective integer programs.
\newblock {\em Annals of Operations Research}, 147:43--70, 2006.

\bibitem{ramos1998problem}
Rosa~M Ramos, Sergio Alonso, Joaqu{\'\i}n Sicilia, and Carlos Gonz{\'a}lez.
\newblock The problem of the optimal biobjective spanning tree.
\newblock {\em European Journal of Operational Research}, 111(3):617--628,
  1998.

\bibitem{sourd2008multiobjective}
Francis Sourd and Olivier Spanjaard.
\newblock A multiobjective branch-and-bound framework: Application to the
  biobjective spanning tree problem.
\newblock {\em INFORMS Journal on Computing}, 20(3):472--484, 2008.

\bibitem{steuer1983interactive}
Ralph~E Steuer and Eng-Ung Choo.
\newblock An interactive weighted tchebycheff procedure for multiple objective
  programming.
\newblock {\em Mathematical programming}, 26:326--344, 1983.

\bibitem{stidsen2018hybrid}
Thomas Stidsen and Kim~Allan Andersen.
\newblock A hybrid approach for biobjective optimization.
\newblock {\em Discrete Optimization}, 28:89--114, 2018.

\bibitem{stidsen2014branch}
Thomas Stidsen, Kim~Allan Andersen, and Bernd Dammann.
\newblock A branch and bound algorithm for a class of biobjective mixed integer
  programs.
\newblock {\em Management Science}, 60(4):1009--1032, 2014.

\bibitem{ulungu1995two}
Ekunda~Lukata Ulungu and Jacques Teghem.
\newblock The two phases method: An efficient procedure to solve bi-objective
  combinatorial optimization problems.
\newblock {\em Foundations of computing and decision sciences}, 20(2):149--165,
  1995.

\end{thebibliography}

\end{document}